 \documentclass[journal]{IEEEtran}
\usepackage{amsmath,amssymb,amsfonts}
\usepackage{mathtools}
\usepackage{amsthm}
\usepackage{mathdots}
\usepackage{graphicx}
\usepackage{xcolor}
\usepackage{textcomp}
\usepackage{setspace}
\usepackage[style=ieee, doi=false]{biblatex}
\usepackage{mycommands}
\usepackage{algpseudocode,algorithm}
\usepackage{hyperref}
\usepackage{array}
\usepackage{booktabs,multirow,makecell}
\usepackage{scalerel}

\usepackage{bm}

\newtheorem{proposition}{Theorem}
\newtheorem{corollary}{Corollary}

\ProvideDocumentCommand{\intinf}{}{\int_{-\infty}^{\infty}\!\!}
\ProvideDocumentCommand{\intzeroinf}{}{\int_{0}^{\infty}\!\!}
\RenewDocumentCommand{\bard}{}{\,\,\bar{\!\!d}}

\makeatletter
\AtBeginDocument{\DeclareMathVersion{bold}
	\SetSymbolFont{operators}{bold}{T1}{times}{b}{n}
	\SetMathAlphabet{\mathrm}{bold}{T1}{times}{b}{n}
	\SetMathAlphabet{\mathit}{bold}{T1}{times}{b}{it}
	\SetMathAlphabet{\mathbf}{bold}{T1}{times}{b}{n}
	\SetMathAlphabet{\mathtt}{bold}{OT1}{pcr}{b}{n}
	\SetSymbolFont{symbols}{bold}{OMS}{cmsy}{b}{n}
	\renewcommand\boldmath{\@nomath\boldmath\mathversion{bold}}}
\makeatother

\AtBeginBibliography{\footnotesize}
\hypersetup{
	colorlinks=true,
	linkcolor=blue,    
	urlcolor=blue,
	citecolor=blue,
	citebordercolor=blue,
	linkbordercolor=blue,
	pdftitle={C\&R-RBP},
}

\begin{document}

\title{Low-Latency LDPC Decoding Based on Consensus and Return Residual Belief Propagation}
\author{Allan~E.~Feitosa and Rodrigo~C.~de~Lamare \vspace{-2em}
\thanks{Departament of Electrical Engineering, PUC-Rio, Rio de Janeiro 22451-900, Brazil and 	University of York, York, UK.}}

\maketitle

	
\begin{abstract}
We propose two new decoding algorithms for low-density parity-check (LDPC) codes. The \emph{Consensus} and \emph{Return} residual belief propagation (C\&R-RBP) algorithm reduces the number of iterations required to achieve a target decoding error rate over an AWGN channel by exploiting the availability of updated information, thereby mitigating the performance degradation caused by the greedy behavior of RBP-based algorithms and reducing error propagation. The \emph{Consensus} and \emph{Delayed Return} RBP algorithm (C\&DR-RBP) achieves better performance in the early decoding iterations than C\&R-RBP, making it particularly suitable for low-latency applications.
Both proposed methods maintain the same computational complexity as the original RBP algorithm in terms of residual calculations.
We also provide a formal analysis showing that the \emph{Consensus} mechanism increases the probability of correct bit decisions under initial decoding conditions and in the presence of severe bit transmission errors, providing insight into the performance gains observed in the simulations.
Numerical results based on the rapidly growing 5G New Radio (NR) standard, considering different code lengths, code rates, and channel models, demonstrate that both methods outperform competing RBP-based algorithms with similar overall computational complexity in terms of frame error rate (FER) and bit error rate (BER). 
\end{abstract}
\begin{IEEEkeywords}
LDPC codes, Residual Believe Propagation, RBP, message passing, dynamic scheduling techniques, low decoding latency, 5G NR\@.
\end{IEEEkeywords}

\section{Introduction}\label{sec:introduction}
        Low-density parity-check (LDPC) codes have gained widespread attention due to their Shannon capacity approaching performance. This is evidenced by their adoption as the channel coding scheme in several communication standards, including WiFi \cite{WiFi2004} and WiMAX \cite{WiMAX2005}.
More recently, LDPC codes were selected by the 3rd Generation Partnership Project as the data channel coding for the 5G new radio (NR) enhanced Mobile Broadband (eMBB) \cite{3GPP2025,Richardson2018}.

LDPC codes were first introduced by Gallager \cite{Gallager1962} in 1962, together with a near-optimal decoding algorithm known as the \emph{sum–product algorithm} (SPA), also referred to as \emph{belief propagation} (BP) in the literature.
However, Gallager's work received limited attention for several decades, with the notable exception of Tanner's work in 1981 \cite{Tanner1981}, which generalized LDPC codes and introduced their graphical representation, the \emph{Tanner graph}.
A Tanner graph is a bipartite graph composed of variable nodes (VNs), representing coded input bits, and check nodes (CNs), representing parity-check equations.
In 1999, MacKay \cite{MacKay1999} contributed significantly to the revival of LDPC codes and described the classical implementation of the SPA algorithm, the so-called \emph{flooding} scheduling, in which the BP algorithm performs \emph{message-passing} between CNs and VNs in the Tanner graph.
Information transmitted from CNs to VNs is referred to as a C2V message, whereas information transmitted in the opposite direction is referred to as a V2C message.
In flooding scheduling, all V2C messages are updated and propagated simultaneously, followed by the simultaneous update of all C2V messages.

Following MacKay's revival of LDPC codes, many strategies were proposed to improve code error-rate performance.
Whereas many strategies focus on optimizing the design of the Tanner graph, such as the progressive edge growth (PEG) designs \cite{peg,bfpeg,dopeg,memd}, this paper focuses on message scheduling algorithms that differ from the conventional flooding scheme.
One prominent example is \emph{sequential scheduling}, such as the Layered Belief Propagation (LBP) algorithm \cite{Hocevar2004}, which adopts a different strategy from flooding scheduling: for each CN \(c\muspace\), all V2C messages associated with \(c\) are updated first, followed by all the corresponding C2V messages.
A subsequent development was the class of the so-called \emph{informed dynamic scheduling} (IDS) methods \cite{Hocevar2004,Casado2010,kaids} and reweighting techniques \cite{vfap}, among which the \emph{Residual Belief Propagation} (RBP) \cite{Casado2010} is perhaps the most representative.
In RBP, the sequence of message updates is dynamically determined based on the C2V residual, defined as the absolute difference between C2V values before and after an update.
The C2V with the largest residual, represented by the pair (\(c^*\), \(v^*\)), is selected for update, followed by V2C messages from \(v^*\) to all CNs other than \(c^*\).
RBP achieves better performance than flooding and LBP scheduling schemes, albeit at the cost of increased computational complexity.

It is well known that RBP suffers from performance degradation due to its \emph{greedy behavior}.
As discussed in \cite{Casado2010}, the largest-residual criterion tends to prioritize message propagation towards the least reliable VNs.
Consequently, decoding errors may be propagated when a C2V update produces an incorrect bit estimate.
Furthermore, as reported in \cite{Lee2013}, when a received codeword cannot be successfully corrected, a small group of edges in the Tanner graph tends to be updated repeatedly, while other VNs have little or no opportunity to contribute their intrinsic information to the decoding process.
Although RBP often achieves lower error rates with a smaller number of message updates, its greedy nature may lead to degraded error-rate performance after many decoding iterations.
Similar observations regarding the greedy behavior of RBP have also been reported in \cite{Zhang2019,Chang2021,Yan2024}.

Subsequent RBP-based IDS algorithms sought to improve performance by mitigating the performance degradation of RBP caused by the greedy behavior.
The \emph{Node-wise} RBP (NW-RBP) \cite{Casado2010} updates and propagates all C2V messages corresponding to \(c^*\) to alleviate RBP greediness.
The \emph{Silent-Variable-Node-Free} RBP (SVNF) \cite{Lee2013} ensures that all VNs have the same opportunity to contribute their intrinsic information to the decoding process.
The Oscillating Variable Nodes RBP (OV-RBP) \cite{Liu2015} identifies unreliable VNs for prioritized update based on sign oscillations of the VN posterior log-likelihood ratio (LLR).
The Residual-Decaying RBP (RD-RBP) \cite{Zhang2019} penalizes repeatedly updated residuals through an attenuation factor.
The \emph{Conditional Innovation} RBP (CI-RBP) \cite{Chang2021} estimates the potential improvement of a VN using the difference between the current and precomputed conditional posterior probabilities of correct bit decision, albeit at the expense of increased computational complexity required for probability estimation.
In the \emph{Update Before Propagate} RBP (UBP-RBP) \cite{Chang2022}, after a CN transmits its C2V message, it remains muted until new information is received from its neighboring VNs.
Finally, the List-RBP \cite{Yan2024} employs two partially updated residual lists to prevent a small subset of residuals from monopolizing the update process.

Compared to traditional Flooding and LBP algorithms, RBP-based IDS methods can achieve a given error-rate performance in considerably fewer algorithm iterations.
However, this reduction comes at the cost of increased computational complexity per iteration.
Therefore, there is a trade-off that should be considered when evaluating actual decoding throughput and latency.
For example, conventional low-complexity approaches based on the \emph{min-sum} approximation of SPA \cite[p.~226]{Ryan2009}, such as layered min-sum or offset min-sum decoders, can offer lower computational complexity per iteration, although requiring more iterations to achieve the same performance of the original SPA implementation.

In this work, we propose two novel RBP-based algorithms, the \emph{Consensus and Return} RBP (C\&R-RBP) and the \emph{Consensus and Delayed Return} RBP (C\&DR-RBP) \cite{crrbp}, designed to achieve superior error-rate performance compared with existing RBP-based algorithms while maintaining the same computational complexity as the original RBP in terms of residual calculations.
This is achieved by two new techniques, namely the \emph{Consensus} and \emph{Return} mechanisms, which differ from the aforementioned RBP-based strategies and are designed to mitigate the performance degradation caused by the greedy behavior of RBP and reduce error propagation.
The \emph{Consensus} mechanism exploits the availability of updated information more effectively than existing RBP-based methods, while \emph{Return} is a complement of this strategy which allows further propagation of updated information across the code graph.
In particular, C\&DR-RBP applies the \emph{Return} mechanism only after a small number of algorithm iterations and requires fewer iterations than other RBP-based methods to achieve a given error-rate performance, as demonstrated by simulation results.
A detailed evaluation of the computational complexity of both algorithms is provided.
We also show an analysis showing that the consensus mechanism increases the probability of correct bit decisions under both initial decoding conditions and in the presence of severe bit transmission errors, providing insight into the performance gains observed by simulations.
Numerical results show that both algorithms outperform competing RBP-based methods with similar overall computational complexity, ensuring a fair comparison among different decoding strategies.
Simulations are conducted using LDPC parity-check matrices based on the 5G NR standard, reflecting its widespread adoption in modern communication systems.	

This paper is organized as follows.
Section~\ref{system} presents the system model, the channel characteristics, and problem statement.
Section~\ref{proposed} introduces the proposed methods, an overview of message-passing algorithms for LDPC decoding and residual belief propagation.
Section~\ref{Analysis} provides the computational complexity analysis of the proposed methods and a formal analysis of \emph{Consensus} under initial decoding conditions.
Section~\ref{considerations} further discusses some aspects of 5G NR codes.
Section~\ref{Implementation} discusses practical implementations and parallelization.
Section~\ref{Simul} presents our simulation results for different scenarios.
In Section~\ref{Latency} we estimate and compare the decoding latency and memory requirements of our proposed algorithms and other RBP-based methods.
Section~\ref{Conclusions} concludes this paper and discusses directions for future research.
	
\section{System model and problem statement}\label{system}
	
        In this section, we detail the LDPC coding system model, the channel characteristics, and the problem we address herein.
	
\subsection{Channel Model and Modulation}   

A typical digital communication system is depicted in Fig.~\ref{fig:system}.
A message is encoded at the source, producing the message bit vector \(\bma \in \sR^K\).
The channel encoder adds structured redundancy bits, represented by vector \(\bmr\), to enable forward error correction, generating the codeword \(\bmb \in \sR^N\), such that
\[ \bmb = \begin{bmatrix}
    \,\bma\, \\
    \,\bmr\,
\end{bmatrix}.\]
After encoding, \(\bmb\) is modulated using binary phase-shift keying (BPSK), mapping each bit \(b_j\) to a symbol \(u_j\) according to
\begin{equation}\label{bitmapping}
    u_j = (-1)^{b_j}, \ j \in \{1,2,...,N\}.
\end{equation}
This mapping associates bit \(0\) with the symbol \(+1\) and bit \(1\) with the symbol \(-1\), resulting in antipodal signaling suitable for binary transmission.
The received signal vector is given by \(\bmy = \bmu + \bmw\), where \(\bmu = [\,u_1 \ ... \ u_N]^\top\) and \(\bmw = [\,w_1 \ ... \ w_N]^\top\) is an additive white Gaussian noise (AWGN) process that models thermal noise and other random disturbances in the communication channel.
The AWGN model is widely adopted in the LDPC code literature and is suitable for many practical scenarios, including deep-space communication and point-to-point microwave links.
In addition, it serves as a baseline for more complex models.
Despite its limitations, such as neglecting impulsive noise, fading effects and channel correlation, the AWGN model remains attractive due to its analytical tractability and the relative simplicity for parameter estimation compared with non-Gaussian models.
Therefore, the theoretical analysis in this work is developed under the AWGN assumption, which provides a well-established baseline for analyzing the proposed RBP-based IDS algorithms.
\begin{figure}[ht]
\vspace{-1em}
    \centering
    \includegraphics[width=1\linewidth]{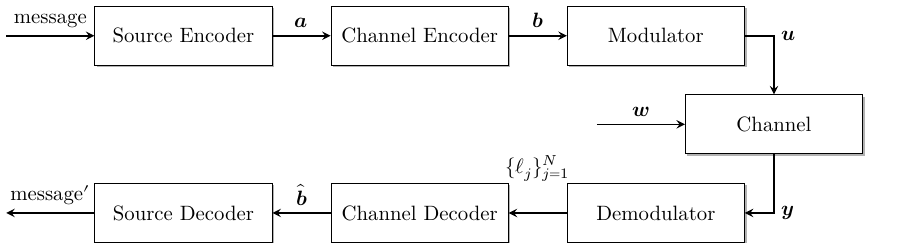}
    \caption{Block diagram of a typical digital communication system.}
    \label{fig:system}
\end{figure}   

At the receiver, the demodulator processes the received signal vector \(\bmy\) and computes soft log-likelihood ratios (LLRs) for each received symbol.
Considering the received signal
\begin{equation}\label{yj}
    y_j = u_j + w_j\,,
\end{equation}
BPSK modulation over an AWGN channel, and the bit mapping defined in \eqref{bitmapping}, the LLR \(\ell_j\) associated with \(y_j\) is given by
\begin{equation}\label{LLR}
    \ell_j = \ln\left(\tfrac{f(y_j \mid b_j = 0)}{f(y_j \mid b_j = 1)}\right) = \tfrac{2y_j}{\sigma^2},
\end{equation}
where \(\sigma^2 = \Var{w_j}\) \(\forall j\muspace\), and \(f(y_j \mid b_j = b)\), denotes~the~probability density function (PDF) of the random variable \(Y_j\muspace\), conditioned on \(b_j = b\), \(b \in \{0,1\}\), and evaluated at \(y_j\muspace\).
These LLRs provide the initial soft information for the iterative decoding process, which estimates the transmitted codeword \(\bmhatb\) by correcting errors introduced during transmission.
Finally, the source decoder reconstructs the original message.

\subsection{LDPC codes}

LDPC codes are \((N,K)\) linear block codes characterized by a sparse \(M\!\times\!N\) parity check matrix \(\bmH\), where \(M\!=\!N-K\).
Each row of \(\bmH\) represents a single parity-check equation.
The codeword \(\bmb\) must satisfy \(\bmH \bmb = \symbfit{0}\) in GF(2).
Any LDPC code can be represented by a bipartite graph \(\cG = (\cV,\cC,\cE)\), known as a Tanner graph, where \(\cV = \{v_j\}_{j=1}^N\) denotes the set of VNs, each \(v_j\) representing a codeword bit \(b_j\muspace\), and \(\cC = \{c_i\}_{i=1}^M\) denotes the set of CNs, each \(c_i\) representing a parity-check equation.
Finally, \(\cE\) denotes the set of edges connecting VNs and CNs, with cardinality \(E~=~\abs{\cE}\).

LDPC codes can be classified as regular or irregular according to their \emph{node degree distributions}.
In regular LDPC codes, the node degrees are uniform, i.e., all variable nodes are connected to the same number of check nodes, and vice versa.
In irregular LDPC codes, however, these degrees are non-uniformly distributed among nodes.
The 5G NR LDPC codes considered in this work are inherently irregular.

\subsection{Problem Statement}

The fundamental challenge in LDPC decoding is to efficiently recover the transmitted codeword \(\bmb\) from noisy observations \(\bmy\) while minimizing (\textit{i}) the number of algorithm iterations required to achieve a target error rate and (\textit{ii}) the associated computational complexity.
The combined impact of iterative processing and computational complexity contributes significantly to decoding latency, which is a critical issue in real time applications.
Our objective is to develop an iterative LDPC decoding algorithms with low decoding latency by means of a reduced number of iterations, and capable of achieving reliable error correction, while maintaining similar overall computational complexity of existing RBP-based  algorithms.
	
	\section{Proposed C\&R-RBP and C\&DR-RBP Algorithms}\label{proposed}
	
        In this section, we review message-passing algorithms and dynamic scheduling strategies and describe the proposed C\&R-RBP and C\&DR-RBP algorithms.
	
\subsection{Message-passing}

Both proposed algorithms are based on the exchange of information between VNs and CNs.
Information transmitted from CNs to VNs is called a C2V message, whereas information transmitted in the opposite direction is called a V2C message.
Consequently, the number of distinct messages is equal to the number of edges \(E\) in the Tanner graph.
Following the log-domain formulation proposed in \cite{Xiao2001}, the V2C message \(\vc{j}{i}\) associated with an edge \((c_i,v_j)\) is computed as
\begin{equation}\label{v2c}
    \vc{j}{i} = \textstyle \sum_{c_a \in \,\cC(v_{j})\setminus \{c_i\}} \cv{a}{j} + \ell_{j\muspace},
\end{equation}
where \(\cv{a}{j}\) denotes the C2V message received at the VN \(v_j\) from the CN \(c_a\muspace\), \(\cC(v_j)\!\setminus\!\{c_i\}\) represents the set of CNs linked to \(v_j\) excluding \(c_i\muspace\), and \(\ell_j\) denotes the LLR defined in \eqref{LLR}.

The C2V message \(\cv{i}{j}\) associated with an edge \((c_i,v_j)\) is updated according to
\begin{equation}\label{c2v}
    \cv{i}{j} = 2 \, \tanh^{-1}\Big(\textstyle \prod_{v_k \in \, \cV(c_i)\setminus \{v_j\}} \tanh\left(\frac{\vc{k}{i}}{2}\right)\!\!\Big),
\end{equation}
where \(\vc{k}{i}\) is the V2C message received at the CN \(c_i\) from the VN \(v_k\muspace\), and \(\cV(c_i)\!\setminus\!\{v_j\}\) is the set of VNs linked to \(c_i\) excluding \(v_j\muspace\).
Defining the \emph{posterior} LLR \({\ell^\prime}_{\!\!j}\) as
\begin{equation}\label{post}
    \ell_j^\prime = \sum_{c_i \in \, \cC(v_j)} \cv{i}{j} + \ell_j\,,
\end{equation}
and based on the bit mapping defined in \eqref{bitmapping}, the bit estimate \(\hatb_j\) is decided according to the test
\begin{equation}\label{estimate}
    \hatb_j = \begin{cases}
        1, & \text{if } \ell_j^\prime < 0, \\
        0, & \text{otherwise}.
    \end{cases}
\end{equation}   

The particular order in which the messages are updated determines the scheduling strategy of the decoding algorithm (e.g., Flooding, LBP, and RBP).
The update process is repeated until a stopping criterion is satisfied, namely, either all the parity check constraints are fulfilled or a maximum number of iterations is reached.

\subsection{Dynamic Scheduling using residuals}\label{RBP}

RBP-based methods dynamically determine the sequence of message updates based on the C2V \emph{residual}, defined as
\begin{equation}\label{res}
    \residual{i}{j} = |\,\cv{i}{j}^\prime - \cv{i}{j}|,
\end{equation}
where \(\cv{i}{j}^\prime\) denotes the value of \(\cv{i}{j}\) computed using the most up-to-date information available, i.e., the most recently updated V2C messages received at CN \(c_i\).
To illustrate the use of residuals in dynamic scheduling, consider the RBP procedure depicted in Figure~\ref{Fig:RBP}, which shows a subgraph of a Tanner graph in which both the VN and CN degrees are equal to 4.
The degree of a node corresponds to the number of neighboring nodes connect to it\footnote{For irregular Tanner graphs, i.e., graphs in which the number of connected nodes varies across the graph, the VN and CN degrees are defined as the average number of neighboring connected nodes.}.
The basic steps of RBP are summarized as follows:
\begin{enumerate}
    \item  Find the pair \((c_{i}^*,v_{j}^*)\) associated with the C2V message that has the largest residual among all the C2V messages, and update \(\cvopt\) according to \eqref{c2v} (depicted as the orange arrow in Figure~\ref{Fig:RBP});
    \item Update, using \eqref{v2c}, the V2C messages associated with all CNs connected to \(v_{j}^*\muspace\), excluding \(c_{i}^*\muspace\) (depicted as blue arrows in Figure~\ref{Fig:RBP});
    \item For each \(c_i\muspace\) obtained in the previous step, compute the new \(\cv{i}{j} ^\prime\) and residuals \(\residual{i}{j}\) using \eqref{res} for all VNs \(v_j\) connected to \(c_i\muspace\), excluding \(v_{j}^*\muspace\) (depicted as purple dotted arrows in Figure~\ref{Fig:RBP}).
\end{enumerate}

\begin{figure}[ht]
    \centering
    \includegraphics[width=\linewidth]{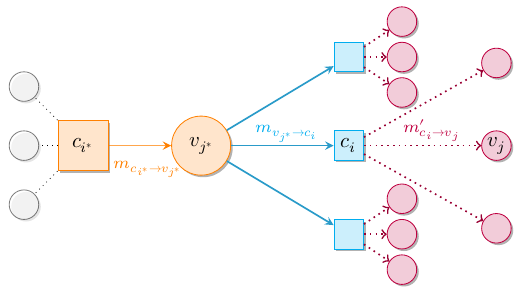}
    \caption{Illustration of the basic steps of the RBP algorithm on a Tanner graph subgraph with VN and CN degrees equal to 4.}
    \label{Fig:RBP}
\end{figure}

\begin{figure}[ht]
    \centering
    \includegraphics[trim={15 20 20 20}, clip,width=\linewidth]{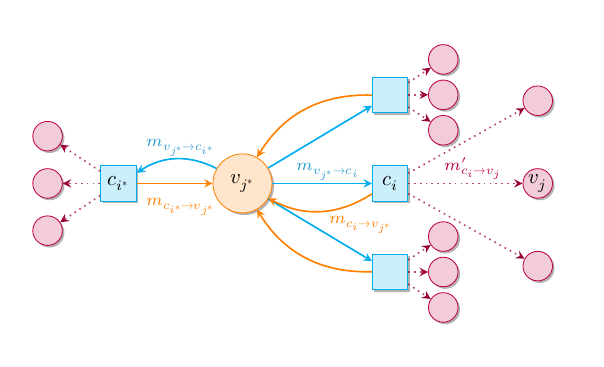}
    \caption{C\&R-RBP algorithm. \emph{Consensus} is represented by orange arrows. Blue arrows represent the default V2C updates and \emph{Return}. Purple dotted arrows represent the default RBP residual calculations plus the additional residual calculations due to \emph{Return}.}
    \label{Fig:CR-RBP}
\end{figure}	

These steps are repeated until the number of updated C2V messages reaches \(E\), the number of edges in the Tanner graph, which corresponds to the total number of distinct C2V messages.
Therefore, one decoding iteration consists of \(E\) C2V message updates.
The algorithm continues until a stopping criterion is satisfied.
Finally, the bit estimates are obtained according to \eqref{estimate}.

\subsection{Proposed C\&R-RBP Algorithm}

C\&R-RBP introduces two modifications to the conventional RBP procedure in order to improve the error-rate performance and reduce the number or iterations required to achieve a given error-rate performance.
To illustrate these strategies, consider the same Tanner subgraph example shown in Figure~\ref{Fig:RBP}, now reproduced in Figure~\ref{Fig:CR-RBP}.
The proposed strategies are summarized as follows:
\begin{enumerate}
    \item \emph{Consensus}: in addition to updating \(\cvopt\), the algorithm also updates the remaining C2V messages associated with the VN \(v_j^*\) corresponding to the largest residual (depicted as orange arrows in Figure~\ref{Fig:CR-RBP}).
    \item \emph{Return}: The returning V2C message \(\vcopt\) is also updated together with the other V2C messages from \(v_{j^*}\) (depicted as blue arrows in Figure~\ref{Fig:CR-RBP}).
\end{enumerate}

The motivation behind the \emph{Consensus} strategy is to reduce the bias introduced by the message \(\cvopt\muspace\) during the subsequent V2C update, which can be interpreted, from  \eqref{v2c}, as a weighted consensus operation.
If only \(\cvopt\) contributes new information, the resulting consensus becomes strongly biased towards this message.
By additionally updating the remaining C2V messages associated with VN \(v_j^*\muspace\), the proposed approach seeks to reduce the propagation of errors and mitigate the performance degradation of RBP\@.
In Subsection~\ref{Math}, we show through mathematical analysis that this intuition holds under the initial decoding conditions.	
Importantly, the additional C2V updates introduced by the \emph{Consensus} strategy do not increase computational complexity, since the messages \(\cv{i}{j} ^\prime\), computed during residual evaluation, can be stored and reused later.
The extra memory requirement is relatively small and is analyzed in Subsection~\ref{memory}.

The \emph{Return} strategy complements \emph{Consensus}.
Since C2V messages beyond \(\cvopt\) are updated, the returning V2C message \(\vcopt\) now carries additional information originating from the newly updated C2V messages.
Furthermore, the residuals corresponding to the C2V messages from \(c_i^*\) (represented by the leftmost purple dotted arrows in Figure~\ref{Fig:CR-RBP}) must also be computed.
To preserve the same computational complexity in terms of residual calculation as conventional RBP, the number of updates per iteration is reduced from \(E\) to \(E-N\).
This reduction is detailed in Section~\ref{Analysis}.

To further mitigate the greedy behavior, we incorporate the residual-decaying factor of \cite{Zhang2019} into the residual calculation:
\begin{equation}\label{decay}
    \residual{i}{j} = \beta_{i,j}\,\lvert \cv{i}{j}^\prime - \cv{i}{j} \rvert,
\end{equation}
where \(\beta_{i,j}\) is the residual attenuation factor.
At the start of each algorithm iteration, \(\beta_{i,j}\ = 1\), and each time \(\cv{i}{j}\) is selected for updating, \(\beta_{i,j}\) is multiplied by a constant \(\beta\), typically between \(0.8\) and \(0.9\).
At the end of each iteration, all \(\beta_{i,j}\) are reset to prevent residuals from being excessively attenuated.
Algorithm~\ref{alg1} summarizes the C\&R-RBP algorithm.

\begin{algorithm}[ht]
    \caption{C\&R-RBP decoding.}
    \begin{algorithmic}[1]
        \State Initialize \(e = 0\)
        \State Initialize all \(\beta_{i,\muspace j} = 1\), \(\cv{i}{j} = 0\), and \(\vc{j}{i} = \ell_j\)		
        \State Compute and store all \(\cv{i}{j}^\prime\)
        \State Set all \(\residual{i}{j} = |\cv{i}{j}^\prime|\)
        \State Find \((i^*,j^*) = \arg \max_{i,j} \ \residual{i}{j}\)
        \ForAll{\(c_i \in \cC(v_j^*)\)}
        \State Set \(\cv{i}{j^*} = \cv{i}{j^*}^\prime\) \ \text{and} \ \(\residual{i}{j^*} = 0\)
        \State Update \(\beta_{i,\muspace j^*} = \beta_{i,\muspace j^*}\times\beta\)
        \EndFor
        \State Update \(e = e + 1\)
        \ForAll{\(c_i \in \cC(v_j^*)\)}
        \State Generate and propagate \(\vc{j^*}{i}\)
        \ForAll{\(v_j \in \cV(c_i)\setminus \{v_j^*\}\)}
        \State Compute and store \(\cv{i}{j}^\prime\)
        \State Compute \(\residual{i}{j}\) according to \eqref{decay}
        \EndFor
        \EndFor
        \If{\(e\) equals \(E-N\)}
        \State Reset all \(\beta_{i,\muspace j} = 1\) and \(e = 0\)
        \EndIf
        \State \textbf{if} stopping rule is not satisfied \textbf{then}
        \State \hspace{0.5cm} Continue execution from line 5;
        \State \textbf{end if}
    \end{algorithmic}
    \label{alg1}
\end{algorithm}

The message-update strategy employed by C\&R-RBP, as illustrated in Figure~\ref{Fig:CR-RBP}, reveals its inherently \emph{variable-node-centric} nature.
In particular, \(v_{j^*}\) can be interpreted as the least reliable VN at a given stage of the decoding process, and all C2V and V2C messages associated with \(v_{j^*}\) are updated accordingly.
This contrasts with \emph{check-node-centric} approaches, such as NW-RBP \cite{Casado2010}, in which the scheduling process is organized around selected CNs.

\subsection{Proposed C\&DR-RBP Algorithm}\label{C-VN}

Motivated by empirical observations indicating a significant improvement in error-rate performance when the \emph{Return} strategy is applied only after a few iterations, we devised the C\&DR-RBP algorithm summarized in Algorithm~\ref{alg2}.
The modification in line 11 suppresses the update of the returning message, whereas the modification in line 18 preserves the same computational complexity in terms of residual calculation, as we discuss in the next section.

\begin{algorithm}[ht]
    \caption{C\&DR-RBP decoding.}
    \begin{algorithmic}[1]
        \State Given \(i_0\), for iterations \(i < i_0\), apply Algorithm~\ref{alg1} changing the following lines:
        
        11. \textbf{for all} \(c_a \in \cC(v_j^*)\setminus\{c_{i^*}\}\) \textbf{do}
        
        18. \textbf{if} \(e\) equals \(E\) \textbf{then}
        \State For \(i \ge i_0\), apply Algorithm~\ref{alg1} unchanged.
    \end{algorithmic}
    \label{alg2}
\end{algorithm}

The proposed decoding algorithms could be used in iterative detection and decoding and message passing applications \cite{spa,mfsic,mbdf,bfidd,1bitidd,did,baplnc,dynovs,listmtc,msgamp,3lmp,iddllr,comp,cdidd,oclidd}, where they can reduce the required number of outer and inner iterations.
	
	\section{Analysis}\label{Analysis}
	
        In this section, we analyze the overall computational complexity of the proposed algorithms in terms of the average number of C2V, V2C and residual calculations, and also real-value comparisons.
We also develop a theoretical analysis showing that the \emph{Consensus} strategy increases the probability of a correct bit decision under the initial conditions of the decoding process.

\subsection{Computational Complexity}\label{Complexity}

\begin{table*}[ht]
    \centering\scriptsize
    \caption{Overall computational complexity per algorithm iteration in terms of number of calculations of C2V, V2C, residuals and real-value comparisons.}
    \begin{tabular}{@{}c c c c c c @{}}
        \toprule
        Algorithm & \scriptsize Algorithm loops & C2V & \scriptsize V2C & \scriptsize Residuals & \scriptsize Real value comparisons \\
        \midrule
        LLR-BP \cite{MacKay1999}, LBP \cite{Hocevar2004} & \(E\) & \(E\) & \(E\) & 0 & 0\\
        RBP \cite{Casado2010}, RD-RBP \cite{Zhang2019}, SVNF \cite{Lee2013}, UBP-RBP \cite{Chang2022} & \(E\) & \(E\) & \(E(\bard_v - 1)\) & \(E(\bard_v-1)(\bard_c-1)\) & \(E(E-1)\) \\	
        NW-RBP \cite{Casado2010} & \(E\) & \(E\)& \(E(\bard_v - 1)\) & \(E(\bard_v-1)(\bard_c-1)\) & \(M(E-1)\) \\
        List-RBP \cite{Yan2024} & \(E\)  & \(E\)& \(E(\bard_v - 1)\) & \(E(\bard_v-1)(\bard_c-1)\) &  \( E (\bard_v-1)(\bard_c-1)S_2 + S_1 \muspace S_2\) \\
        C\&R-RBP, C\&DR-RBP (\(i\ge i_{\scaleto{0}{2pt}}\))& \(E - N\) & 0 & \(E(\bard_v - 1)\) & \(E(\bard_v-1)(\bard_c-1)\) & \((E-N)(E-1)\) \\
        C\&DR-RBP (\(i\ \!\!<\! i_{\scaleto{0}{2pt}}\))  & \(E\) & 0 & \(E(\bard_v - 1)\) & \(E(\bard_v-1)(\bard_c-1)\) &  \(E(E-1)\)  \\
        \bottomrule
    \end{tabular}
    \label{tab:complexity}
\end{table*}

The computational bottleneck of residual-based IDS algorithms is, to a great extent, the number of residual calculations.
Therefore, to ensure a fair comparison among different methods, we set the expected number of residual calculations, denoted by \(\rho\), to be identical for both C\&R-RBP and RBP\@.
The total average number of residual calculations \(\rho\) is equal to the expected total number \(C\) of C2V precomputations.
Thus,
\begin{equation}\label{rho}
    \rho = C \Rightarrow \rho =  A\cdot\left(\frac{V}{A}\right)\cdot\left(\frac{C}{V}\right),
\end{equation}
where \(A\) and \(V\) denote, respectively, the expected number of algorithm loops and V2C updates.
Let \(\bard_c\) and \(\bard_v\) denote the average CN and VN degrees of the Tanner graph such that 
\[
    E = \bard_c \,  M = \bard_v \, N.
\]
From the discussion in Subsection~\ref{RBP}, for each updated V2C message, there are \(\bard_c-1\) C2V precomputations to evaluate residuals (see Figure~\ref{Fig:RBP}).
Hence,
\begin{equation}\label{1}
    \frac{C}{V} = \bard_c-1.
\end{equation}
For each algorithm loop, there are \(\bard_v-1\) V2C updates.
Thus,
\begin{equation}\label{2}
    \frac{V}{A} = \bard_v-1.
\end{equation}
Thus, since \(A = E\) for RBP, from \eqref{rho}, \eqref{1} and \eqref{2}, we have
\begin{equation}\label{rho2}
    \rho = E(\bard_v-1)(\bard_c-1).
\end{equation}
Note that C\&R-RBP requires one more V2C update per algorithm loop than RBP due to \emph{Return}; i.e.,
\begin{equation*}
    \frac{V}{A} = (\bard_v - 1) + 1 = \bard_v\,.
\end{equation*}
Accordingly, we set \(A = E-N\), which results in an average number of V2C updates equal to
\begin{IEEEeqnarray*}{rCl}
    V & = & A\cdot\left(\frac{V}{A}\right)=(E-N)\bard_v = (E-N)\frac{E}{N} = \left(\frac{E}{N}-1\right)\!E \\
    & = & E(\bard_v-1).
\end{IEEEeqnarray*}
Since \(C \slash V= \bard_c-1\) for C\&R-RBP, the number of residual calculations matches that of classical RBP, as shown in \eqref{rho2}.

For C\&DR-RBP, before the \emph{Return} strategy is applied, the number of residual calculations remains identical to that of RBP because the returning message is not updated.
Therefore, if \emph{Return} is applied at iteration \(i_0\), for \(i < i_0\) we set \(A = E\) to preserve the complexity \(\rho\) given in \eqref{rho2}.

Table~\ref{tab:complexity} summarizes the overall computational complexity of the proposed algorithms, LLR-RBP (flooding), LBP, and other RBP-based methods with similar overall computational complexity.
In general, the number of C2V calculations is equal to the number of algorithm loops, except for the proposed methods, for which it is zero because the messages \(\cv{i}{j} ^\prime\), computed during residual evaluation (lines~3 and 14 of Algorithm~\ref{alg1}), are stored in memory for later reuse.
We also indicate the \emph{theoretical} number of real-value comparisons required to find the largest residual, which is generally equal to \(E\) times the number of comparisons required per algorithm loop.
Since there are \(E\) residuals, \(E-1\) comparisons are required, resulting in a expected total of \(E(E-1)\) comparisons\footnote{See Appendix~\ref{appD} for an estimation of real-value comparisons of List-RBP.}.
However, this theoretical value can be significantly reduced in practical implementations for all RBP-based algorithms, as we discuss in Subsection~\ref{Nops}.

\subsection{Analysis of Consensus under Initial Conditions}\label{Math}

In this subsection, we show that \emph{Consensus} strategy increases the probability of correct decoding under initial decoding conditions and in the presence of severe bit transmission errors.
This analysis provides insight into the performance gains observed in the simulation results.

\subsubsection{Severe bit errors}

RBP-based algorithms exhibit greedy behavior because message updates are guided by the largest C2V residual criterion.
Under initial decoding conditions, i.e., at the start of the decoding process, we have, from \eqref{LLR}, \eqref{v2c}, and \eqref{decay},
\begin{IEEEeqnarray*}{rCl}
    \vc{j}{i} & = & \ell_j = \tfrac{2\abs{y_k}}{\sigma^2}, \\
    \residual{i}{j} & = & | \,\cv{i}{j}^\prime |.
\end{IEEEeqnarray*}
Therefore, the largest residual criterion selects the C2V message with the highest magnitude.
In this sense, the algorithm tends to prioritize C2V messages that are more likely to be correct.
To further examine this, consider the \emph{min-sum} approximation of the C2V message computation \cite[p.~226]{Ryan2009}:
\begin{equation}\label{minsum}
    \cv{i}{j} \approx \bigg(\prod_{v_k \in \cV(c_i)\setminus \{v_j\}} \!\!\!\!\!\! \alpha_{v_k}\bigg) \min_{v_k \in \, \cV(c_i)\setminus \{v_j\}} \abs{\vc{k}{i}},
\end{equation}
where \(\alpha_{v_k} = \text{sign}(\vc{k}{i})\).
An LDPC decoder that employs \eqref{minsum} to compute C2V messages is referred to as a \emph{min-sum decoder}.
Under this approximation, the largest residual criterion tends to select for updating, under initial decoding conditions,
\begin{IEEEeqnarray*}{rCl}
    (i^*\!,\muspace j^*) & = & \arg \max_{i,\, j} \min_{v_k \in \, \cV(c_i)\setminus \{v_j\}} \abs{\vc{k}{i}} \\
    & = & \arg \max_{i,\, j} \min_{v_k \in \, \cV(c_i)\setminus \{v_j\}} \tfrac{2\abs{y_k}}{\sigma^2}. \IEEEyesnumber \label{type1}
\end{IEEEeqnarray*}

To better understand the mechanism described in \eqref{type1}, consider the sets of the form \(\cV(c_i)\setminus \{v_j\}\), i.e., the sets of VNs connected to the same CN \(c_i\), \emph{excluding} one of those VNs.
For convenience, we refer to such sets as a \emph{type-1 VN set}.
Figure~\ref{fig:min-sum} illustrates an example of a type-1 VN set, represented by the VNs \(\{v_1, v_2, v_3\}\) (blue circles).
Each C2V message is associated with a distinct type-1 VN set.

\begin{figure}[ht]
    \centering
    \includegraphics[width=0.75\linewidth]{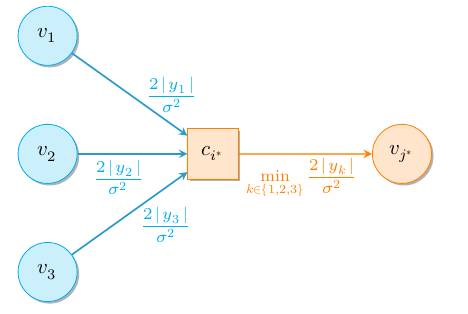}
    \caption{An example of a type-1 VN set (blue circles) under initial decoding conditions. The message \(\cvopt\) is calculated using min-sum.}
    \label{fig:min-sum}
\end{figure}

Now consider, for each type-1 VN set, the V2C message with the smallest magnitude.
If the message \(\cvopt\) in Figure~\ref{fig:min-sum} is selected for updating, then \(\{v_1, v_2, v_3\}\) corresponds to the type-1 VN set whose smallest V2C message is larger than that of all other type-1 VN sets, according to the criterion in \eqref{type1}.
In this sense, the set \(\{v_1, v_2, v_3\}\) can be interpreted as the most reliable VN subset in the Tanner graph.
Consequently, the RBP mechanism tends to select the C2V message associated with the most reliable type-1 VN set.

Therefore, it is intuitive that this mechanism is likely to select the type-1 VN set in which all observations \(y_j\) received are reliable.
Consequently, this mechanism selects a type-1 VN set containing at least one flipped observation (i.e., where the sign of \(y_j\) differs from that of transmitted signal \(u_j\) in \eqref{yj}) only in the presence of a severe bit error, in which the received observation \(y_j\) is not only incorrect in sign, but also has a large magnitude.  
As a result, although the RBP mechanism increases the likelihood of selecting the most reliable type-1 VN set, it also introduces the risk of selecting type-1 VN sets with severely corrupted observations.
Once such erroneous information is propagated through the Tanner Graph, correcting it becomes considerably more difficult due to the large error magnitude.
This interpretation, developed herein using the min-sum approximation, provides useful insight into the origin of the performance degradation associated with the greedy behavior of RBP  and helps explain why mitigating this effect is particularly challenging.

\subsubsection{Probability of correct decision given  severe bit error}

The \emph{Consensus} strategy increases the probability of a correct bit decision in the presence of a severe bit error under the initial conditions of the decoding process.
Consequently, it mitigates the performance degradation associated with the greedy behavior discussed above.
The following analysis explains this effect.

Let \(M_{c_a \rightarrow v_j}\) and \(L_j\) denote the random variables whose observations are \(\cv{a}{j}\) and \(\ell_j\muspace\), respectively.
Suppose that \(\cvopt^\prime\) is the C2V message with the largest magnitude; thus, by the largest residual criterion, we assign \(\cvopt = \cvopt^\prime\).
Let
\begin{equation}\label{Delta}
    \Delta = \sum_{c_a \in \, \cC(v_j^*)\setminus\{c_i^*\}} M_{c_a \rightarrow v_j^*}
\end{equation}
denote the sum of the additional C2V messages updated by the \emph{Consensus} strategy.
Therefore, the random variable corresponding to the posterior LLR in \eqref{post} can be written as
\begin{equation*}
    L_j^{\prime*} = L_j^* + \cvopt + \Delta\muspace.
\end{equation*}

The proposed method will outperform RBP under initial conditions if it yields a higher probability of correct bit decisions, according to the test in \eqref{estimate}.
Since \(\Delta = 0\)  for RBP under initial decoding conditions, we seek to show, for VN \(v_j^*\muspace\), that
\begin{equation}\label{Prob}
    \Prob{L_j^* + \cvopt + \Delta < 0} \ge \Prob{L_j^* + \cvopt \!< 0}.
\end{equation}
From \eqref{bitmapping}, \eqref{yj}, \eqref{LLR} and under the AWGN assumption, \(L_{j}\) is Gaussian-distributed and satisfies
\begin{equation}\label{ELj}
    \E{L_j} =\frac{2(-1)^{b_j}}{\sigma^2}.
\end{equation}
Assume that \(\Delta\) is Gaussian-distributed and independent of \(L_{j}^*\muspace\).
According to \cite{Chung2001}, this assumption is empirically justified for message-passing algorithms when the number of iterations is small, which is consistent with the initial decoding conditions considered here.
Therefore, from \eqref{Prob}, we obtain
\begin{equation}\label{cdf}
    \Phi\!\left(\tfrac{-\E{L_j^* - \Delta} -\, \cvopt}{\sigma_{L_j^*} + \,\sigma_{\Delta}}\right) \ge \Phi\!\left(\tfrac{-\E{L_j^*} - \,\cvopt}{\sigma_{L_j^*}}\right)\!,
\end{equation}
where \(\Phi(z) = \Prob{Z \le z}\) is the cumulative density function (CDF) of the standard normal random variable \(Z\sim\cN(0,1)\).
Since \(\Phi(\cdot)\) is a monotonically increasing function, the inequality in \eqref{cdf} is equivalent to
\begin{equation}
    \frac{\E{L_j^*} + \,\E{\Delta} + \,\cvopt}{\sigma_{L_j^*} + \,\sigma_{\Delta}} \le \frac{\E{L_j^*} +\, \cvopt}{\sigma_{L_j^*}},
\end{equation}
which can be reduced to
\begin{equation*}
    \E{\Delta} \le \frac{\sigma_{\Delta}}{\sigma_{L_j ^*}}\left(\E{L_j^*} + \cvopt\right).
\end{equation*}
Let us assume, without loss of generality, that \(b_j^* = 1\) was transmitted.
From \eqref{ELj}, we obtain
\begin{equation}\label{inequality}
    \E{\Delta} \le \frac{\sigma_{\Delta}}{\sigma_{L_j ^*}}\left(\cvopt - \tfrac{2}{\sigma^2}\right).
\end{equation}

The inequality in \eqref{inequality} is equivalent to the condition in \eqref{Prob}.
Note that, since \(\sigma_{\Delta}/\sigma_{L_j ^*} > 0\), the condition
\begin{equation}\label{extreme}
    \cvopt > 2/\sigma^2
\end{equation}
is sufficient for \eqref{inequality} to hold, provided that \(\E{\Delta} < 0\).
As shown later, condition \eqref{extreme} corresponds to a severe bit error event when \(b_j^* = 1\).
Consequently, if we can show that \(\E{\Delta} < 0\), then whenever \eqref{extreme} holds, condition \eqref{inequality} also holds.
Therefore, we will have proved that \emph{Consensus} improves the probability of correct bit decisions under the initial conditions and in the presence of severe bit errors.

\begin{proposition}\label{Theorem}
    Let  \(g: \sR^n \rightarrow \sR\) be such that 
    \begin{equation*}
        x_j \ge 0 \ \forall j \in \{1, ... \,,n\} \Rightarrow g(x_1, ... \,, x_n) \ge 0.
    \end{equation*}
    Furthermore, let \(g(\cdot)\) be an odd function with respect to each of its arguments; i.e.,
    \begin{equation*}
        g(...\,, -x_j\muspace, ...) = -g(...\,, x_j\muspace, ...), \ j \in \{1,...\,,n\}.
    \end{equation*}	
    Let \(\{L_j\}_{j=1}^n\) be a sequence of independent random variables with marginal probability density functions \(\{\,f_j(\ell_j)\}_{j=1}^N\).
    Call \(M = g(L_1, ... \,, L_n)\) and define
    \[
    h_j(\ell_j) = f_j(\ell_j)-f_j(-\ell_j).
    \]
    If the expectation \(\E{M}\) exists, it is true that
    \begin{equation*}
        \E{M} = \intzeroinf ... \,\intzeroinf g(\ell_1,... \,,\ell_n)\prod_{j=1}^{n}h_j(\ell_j)d\ell_1... \, d\ell_n.
    \end{equation*}
\end{proposition}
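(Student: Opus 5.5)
The plan is to write \(\E{M}\) as an integral over \(\sR^n\) against the product density \(\prod_j f_j(\ell_j)\), which is valid by independence. We then fold each coordinate onto the half-line \([0,\infty)\) using the oddness of \(g\). Concretely, start from
\[
\E{M} = \intinf \cdots \intinf g(\ell_1,\dots,\ell_n)\prod_{j=1}^n f_j(\ell_j)\,d\ell_1\cdots d\ell_n .
\]
Because \(\E{M}\) is assumed to exist, \(g\prod_j f_j\) is absolutely integrable. Fubini's theorem therefore lets us treat the coordinates one at a time in any order, and splitting each domain into \((-\infty,0)\) and \([0,\infty)\) is legitimate. The hyperplanes \(\ell_j=0\) have Lebesgue measure zero, so the endpoint convention is irrelevant. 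Under this assumption the hypothesis that \(g\ge 0\) on the nonnegative orthant is not actually needed for the identity.

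The key step is a one-variable folding lemma. Fix all variables except \(\ell_j\), write \(G(\ell_j)\) for \(g\) viewed as a function of \(\ell_j\) alone, and substitute \(\ell_j\to-\ell_j\) on the negative half. Oddness in the \(j\)-th argument gives
\[
\intinf G(\ell_j) f_j(\ell_j)\,d\ell_j = \intzeroinf G(\ell_j) f_j(\ell_j)\,d\ell_j + \intzeroinf (-G(\ell_j)) f_j(-\ell_j)\,d\ell_j = \intzeroinf G(\ell_j)\,h_j(\ell_j)\,d\ell_j .
\]
We then apply this inductively for \(j=1,\dots,n\). After \(k\) steps the integrand is \(g\prod_{j\le k}h_j\prod_{j>k}f_j\), with the first \(k\) variables restricted to \([0,\infty)\). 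Oddness of \(g\) in the \((k+1)\)-th argument is untouched by the factors \(h_j\) for \(j\le k\), since those do not depend on \(\ell_{k+1}\). So the same substitution replaces \(f_{k+1}\) by \(h_{k+1}\) and restricts \(\ell_{k+1}\) to \([0,\infty)\). After \(n\) steps we obtain the claimed formula. An equivalent one-shot route is to split \(\sR^n\) into the \(2^n\) orthants indexed by sign vectors \(s\in\{\pm1\}^n\). On each orthant we map to the positive one, pick up the factor \(\prod_j s_j\) from oddness, and obtain the density \(\prod_j f_j(s_j\ell_j)\). Summing over \(s\) gives exactly \(\prod_j\bigl(f_j(\ell_j)-f_j(-\ell_j)\bigr)\).

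The main obstacle is justifying the interchanges and the splitting at every stage, rather than the algebra. Each intermediate integral must be finite and the iterated integrals must be well defined. I would handle this once at the start: existence of \(\E{M}\) means \(\int \lvert g\rvert\prod f_j<\infty\). Then \(\lvert g\prod_{j\le k}h_j\prod_{j>k}f_j\rvert\) on the restricted domain is bounded by a sum of \(2^k\) reflected copies of \(\lvert g\rvert\prod f_j\), each of which is integrable. Fubini–Tonelli therefore applies at each step. The orthant-decomposition version makes this bound transparent, so I would present that as the main argument and the inductive version as intuition.
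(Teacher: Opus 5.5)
Your proposal is correct. Your inductive version is essentially the paper's proof. The paper defines partial integrals $G_j(\ell_{j+1},\dots,\ell_n)=\int_0^\infty G_{j-1}\,h_j\,d\ell_j$ with $G_0=g$, and shows by induction that each $G_j$ stays odd in its remaining arguments. It then folds the innermost integral one coordinate at a time, exactly as in your one-variable lemma, and finally unrolls the recursion.

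The only difference there is bookkeeping. The paper integrates variables out as it goes, so it must carry oddness through the partial integrals. You keep the full integrand and let Fubini reorder the integrals, so oddness in $\ell_{k+1}$ is immediate.

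Your orthant decomposition is a genuinely cleaner route. It performs all reflections at once and uses only additivity of the integral over a finite partition and the measure-preserving maps $\ell\mapsto s\odot\ell$. The factorization $\sum_{s}\prod_j s_j f_j(s_j\ell_j)=\prod_j h_j(\ell_j)$ then replaces the induction entirely.

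It also supplies what the paper leaves out. The paper never justifies writing $\E{M}$ as an iterated integral, splitting and reordering it, or the finiteness of the $G_j$. Your argument dominates $|g\prod_j h_j|$ on $[0,\infty)^n$ by reflected copies of $|g|\prod_j f_j$, using that $|g|$ is even in each argument. Under your reading $\E{|M|}<\infty$, this handles all of these points.

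Your remark that the nonnegativity of $g$ on the positive orthant is unused is consistent with the paper. That hypothesis only matters for Corollary~\ref{Corollary}.
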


\begin{proof}
    See Appendix~\ref{appA}.
\end{proof}

\begin{corollary}\label{Corollary}
    If each \(\{L_j\}_{j=1}^n\) in Theorem~\ref{Theorem} are Gaussian-distributed, then 
    \begin{equation*}
        s(\E{M}) = \prod_{j=1}^n s(\E{L_j}),
    \end{equation*}
    where \(s(x)\) denotes the sign of \(x \in \sR\,\).
\end{corollary}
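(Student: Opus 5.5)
The plan is to apply Theorem~\ref{Theorem} directly and read off the sign of \(\E{M}\) from the sign of each factor \(h_j\) on the positive half-line. With Gaussian inputs, every integrand factor has a constant sign on \((0,\infty)\), fixed by the sign of the corresponding mean.

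First, by Theorem~\ref{Theorem},
\[
\E{M} = \intzeroinf \cdots \intzeroinf g(\ell_1,\dots,\ell_n)\prod_{j=1}^n h_j(\ell_j)\,d\ell_1\cdots d\ell_n ,
\]
and on the domain of integration \(g \ge 0\), since all \(\ell_j \ge 0\).

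Next, fix \(L_j \sim \cN(\mu_j,\sigma_j^2)\). For \(\ell>0\) compute
\[
h_j(\ell) = \frac{1}{\sqrt{2\pi}\sigma_j}\Big(e^{-(\ell-\mu_j)^2/2\sigma_j^2} - e^{-(\ell+\mu_j)^2/2\sigma_j^2}\Big)
= \frac{e^{-(\ell^2+\mu_j^2)/2\sigma_j^2}}{\sqrt{2\pi}\sigma_j}\cdot 2\sinh\!\Big(\frac{\ell\mu_j}{\sigma_j^2}\Big).
\]
The exponential prefactor is positive and \(\sinh\) is odd and strictly increasing. Hence for every \(\ell>0\) we have \(s(h_j(\ell)) = s(\mu_j)\), and \(h_j\equiv 0\) if \(\mu_j=0\). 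Consequently, almost everywhere on \((0,\infty)^n\), \(\prod_j h_j(\ell_j)\) has the constant sign \(\prod_j s(\E{L_j})\), and it vanishes identically if some mean is zero.

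Then factor out the sign: \(\E{M} = \prod_j s(\mu_j)\cdot I\), where
\[
I = \intzeroinf\cdots\intzeroinf g\prod_j |h_j| \ge 0 .
\]
This gives \(s(\E{M}) = \prod_j s(\E{L_j})\) whenever \(I>0\). If some \(\mu_j=0\), both sides are zero, with the convention \(s(0)=0\).

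The main obstacle is the strict positivity \(I>0\) when all \(\mu_j\neq 0\). The hypotheses only give \(g\ge0\) on the positive orthant, so \(g\equiv 0\) there would make \(\E{M}=0\) and the claim fail. I would handle this by noting that \(|h_j|>0\) everywhere on \((0,\infty)\), so \(I>0\) exactly when \(g\) is not a.e.\ zero on the orthant. I would state this as a mild nondegeneracy condition, and observe that it holds for the C2V map \eqref{c2v} and its min-sum approximation \eqref{minsum}, where \(g>0\) strictly on the open orthant. Existence of \(\E{M}\) is assumed, which justifies the manipulations of Theorem~\ref{Theorem}.
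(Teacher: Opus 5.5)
Your proof is correct and follows the paper's route: invoke Theorem~\ref{Theorem}, factor each \(h_j\) as a positive Gaussian weight times \(\sinh(\ell_j\mu_j/\sigma_j^2)\), and read off the sign of the integrand on the positive orthant. Your extra care is warranted: the paper calls the integrand ``positive'', but \(g\ge 0\) only makes it nonnegative, so the strict conclusion implicitly needs your condition that \(g\) not vanish a.e.\ on the orthant (true for the C2V map \eqref{c2v}), and the zero-mean case you treat is also left implicit in the paper.
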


\begin{proof}
    See Appendix~\ref{appB}.
\end{proof}

Since both \(\tanh(x)\) and \(\tanh^{-1}(x)\) are odd non-decreasing functions, one can verify that the C2V update rule defined in \eqref{c2v} satisfies the required properties in Theorem~\ref{Theorem}.
Furthermore, since the \(L_{j}\) are independent under the AWGN assumption (see \eqref{yj} and \eqref{LLR}), it follows from Theorem~\ref{Theorem} and Corollary~\ref{Corollary} that
\begin{equation*}
    s\!\left(\E{M_{c_i \rightarrow v_{j}}}\right) = \prod_{k:v_k\in\cV(c_i)\setminus \{v_j\}} s\left(E\left(L_{k}\right)\right).
\end{equation*}
According to \eqref{ELj}, \(s\!\left(\E{L_j}\right) = (-1)^{b_j}\); thus, we have
\begin{equation}\label{signM}
    s\!\left(\E{M_{c_i \rightarrow v_{j}}}\right) = \prod_{k:v_k\in\cV(c_i)\setminus \{v_j\}}(-1)^{b_k}.
\end{equation}
Because each CN \(c_i\) represents a single parity-check equation, and according to the bit mapping in \eqref{bitmapping}, we must have 
\begin{equation}\label{parity}
    \prod_{k:v_k\in\cV(c_i)}(-1)^{b_k} = 1, \forall c_i\muspace.
\end{equation}
Thus, multiplying \eqref{signM} by \((-1)^{2b_{j}}\!=\!1\) and from \eqref{parity}, we have
\begin{IEEEeqnarray*}{rCl}
    s\!\left(\E{M_{c_i \rightarrow v_{j}}}\right)
    & = & \prod_{k:v_k\in\cV(c_i)\setminus \{v_j\}}(-1)^{b_k} \cdot (-1)^{b_{j}}\cdot(-1)^{b_{j}} \\
    & = & \prod_{k:v_k\in\cV(c_i)}(-1)^{b_k}\cdot(-1)^{b_{j}} \\ 
    & = & (-1)^{b_{j}}. \IEEEyesnumber \label{signM2}
\end{IEEEeqnarray*}

Since we assumed a transmitted bit \(b_{j}^* = 1\), for each \(c_a \in \, \cC(v_j^*)\setminus\{c_i^*\}\), it follows from \eqref{signM2} that
\begin{equation}\label{sEM}
    s\left(\E{M_{c_a \rightarrow v_j^*}}\right) = -1 < 0.
\end{equation}
Thus, from the definition of \(\Delta\) in \eqref{Delta}, we have shown that \(\E{\Delta} < 0\).
Consequently, \eqref{inequality} holds whenever \eqref{extreme} is true.
It remains to show that the condition in \eqref{extreme} corresponds to a severe error event for \(b_j^* = 1\).
First, we use the alternative definition of C2V message update presented in \cite[p.~221]{Ryan2009}:
\begin{equation}\label{c2valt}
    \cv{i}{j} = \bigg(\prod_{v_k \in \cV(c_i)\setminus \{v_j\}} \!\!\!\!\!\! \alpha_{v_k}\bigg) \phi\bigg( \sum_{v_k \in \cV(c_i)\setminus\{v_j\}} \! \phi(\beta_{v_k})\bigg),
\end{equation}
where \(\alpha_{v_k} = s(\vc{k}{i})\), \(\beta_{v_k} = |\vc{k}{i}|\), and \(\phi\) is a function defined as
\begin{equation}\label{phi}
    \phi(x)=\ln\left(\frac{\e^x+1}{\e^x - 1}\right), \ x > 0.
\end{equation}
The following properties of \(\phi\) are straightforward to verify:
\begin{enumerate}
    \item[\(i\muspace\))] \(\phi(x) > 0\) for \(x >0\).
    \item[\(ii\muspace\))] \(\phi(x)\) is monotonically decreasing.
    \item[\(iii\muspace\))] \(\phi^{-1}(x) = \phi(x)\).
\end{enumerate}
The following theorem establishes an important property of C2V messages.

\begin{proposition}\label{Theorem2}
    It follows from the properties of the function \(\phi\) defined in \eqref{phi} that
    \begin{equation}\label{abs1}
        \abs{\cv{i}{j}} \le \min_{v_k \in \, \cV(c_i)\setminus \{v_j\}} \abs{\vc{k}{i}}.
    \end{equation}
\end{proposition}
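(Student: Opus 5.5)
The plan is to work directly from the alternative C2V representation \eqref{c2valt}. The sign factor there has unit magnitude, so
\[
\abs{\cv{i}{j}} = \phi\Big(\textstyle\sum_{v_k \in \cV(c_i)\setminus\{v_j\}} \phi(\beta_{v_k})\Big).
\]
It therefore suffices to compare this quantity with each individual \(\beta_{v_m} = \abs{\vc{m}{i}}\), where \(v_m \in \cV(c_i)\setminus\{v_j\}\). I will assume \(\beta_{v_k} > 0\) for all \(k\), so that \(\phi(\beta_{v_k})\) is defined. If some \(\beta_{v_k} = 0\), then \(\phi(0^+) = +\infty\), the argument of the outer \(\phi\) is \(+\infty\), and \(\phi(+\infty) = 0\). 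In that case \(\cv{i}{j} = 0\), which also follows directly from \eqref{c2v} since one factor \(\tanh(0) = 0\), and \eqref{abs1} holds trivially.

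The key step is the following. Fix any \(v_m \in \cV(c_i)\setminus\{v_j\}\). By property \(i\)), every term \(\phi(\beta_{v_k})\) is strictly positive. Hence
\[
S := \textstyle\sum_{v_k \in \cV(c_i)\setminus\{v_j\}} \phi(\beta_{v_k}) \ge \phi(\beta_{v_m}),
\]
with equality only when the set has a single element. Applying property \(ii\)), that \(\phi\) is decreasing, reverses the inequality: \(\phi(S) \le \phi(\phi(\beta_{v_m}))\). Property \(iii\)), that \(\phi\) is an involution, then gives \(\phi(\phi(\beta_{v_m})) = \beta_{v_m}\). Therefore \(\abs{\cv{i}{j}} \le \abs{\vc{m}{i}}\) for every \(m\). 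Taking the minimum over \(v_m \in \cV(c_i)\setminus\{v_j\}\) yields \eqref{abs1}.

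No step here is a real obstacle. The only points needing care are two technical details. First, \(\phi\) is a bijection of \((0,\infty)\) onto itself, so the composition and the involution identity are valid on the relevant domain. Second, the boundary case of a zero-magnitude V2C message is handled by the limiting argument above or directly from \eqref{c2v}. If one wants to avoid \(\phi\) altogether, an equivalent check is available from \eqref{c2v}: because \(\abs{\tanh(\cdot)} \le 1\), we get \(\abs{\prod_k \tanh(\vc{k}{i}/2)} \le \abs{\tanh(\vc{m}{i}/2)}\). Monotonicity of \(\tanh^{-1}\) on \([0,1)\) then gives the same bound. I would mention this as a consistency check, but keep the \(\phi\)-based argument as the main proof, since the statement asks for it.
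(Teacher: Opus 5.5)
Your proof is correct and follows the paper's argument. Both use positivity of $\phi$ to bound $\sum_k \phi(\beta_{v_k})$ from below by a single term, then apply monotonicity and the involution $\phi(\phi(x))=x$. The only difference is bookkeeping: the paper does this once with the largest term $\max_k \phi(\beta_{v_k}) = \phi(\min_k \beta_{v_k})$, while you do it for every term and then take the minimum, which is equivalent. Your treatment of a zero-magnitude V2C message, where $\phi$ is undefined but the $\tanh$ form gives a zero C2V message, covers a boundary case the paper does not address.
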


\begin{proof}
    See Appendix~\ref{appC}.
\end{proof}

Under initial decoding conditions, the inequality in \eqref{abs1} becomes
\begin{equation}\label{abs}
    \abs{\cv{i}{j}} \le \min_{v_k \in \, \cV(c_i)\setminus \{v_j\}} \frac{2\abs{y_k}}{\sigma^2}.
\end{equation}
Therefore, since \(|\cvopt|\) is assumed to correspond to the largest residual, and since under initial decoding conditions \(\vc{k}{i^*} = 2y_k/\sigma^2\), it follows from \eqref{abs} that all \(\abs{y_k}\) in the type-1 VN set \(\cV(c_{i^*})\setminus \{v_{j^*}\}\) must be sufficiently large; i.e., the magnitudes must be sufficiently distant from zero.
Furthermore, from \eqref{c2valt}, we have
\begin{equation}\label{sign}
    s(\cvopt) = \prod_{v_k \in \, \cV(c_i^*)\setminus \{v_{j^*}\!\}} s\!\left(\vc{k}{i^*}\right).
\end{equation}
Consequently, due to \eqref{sEM} and under the assumption that \(b_j^* = 1\), an observed \(\cvopt > 0\) indicates the existence of at least one VN \(v_{j_0}\) in the type-1 VN set \(\cV(c_{i^*})\setminus \{v_{j^*}\}\) whose received observation has a large magnitude \(\abs{y_{j_0}}\) and an incorrect sign relative to the bit mapping in \eqref{bitmapping}; i.e., \(s(y_{j_0}) \ne s(u_{j_0})\).
Assuming that \eqref{extreme} holds, it follows from \eqref{abs} that
\begin{equation*}
    \frac{2}{\sigma^2} < \cvopt \le \frac{2\abs{y_{j_0}}}{\sigma^2};
\end{equation*}
thus,
\begin{equation*}
    \abs{y_{j_0}} \ge 1.
\end{equation*}
Equivalently,
\begin{equation}
    \begin{cases}
        y_{j_0}  > \phantom{-}1, \text{ if } b_{j_0} = 1, \\
        y_{j_0}  < -1, \text{ if } b_{j_0} = 0,
    \end{cases}
\end{equation}
which, according to the bit mapping in \eqref{bitmapping}, corresponds to severe bit transmission error at VN \(v_{j_0}\muspace\).
Therefore, condition \eqref{extreme} characterizes the presence of a severe transmission error.

Since we have shown that \(\E{\Delta} < 0\) for \(b_j^* = 1\), we have also shown that inequality \eqref{inequality} holds.
Consequently, under initial decoding conditions, the \emph{Consensus} strategy increases the probability of a correct bit decision in the presence of severe bit transmission errors.
	
	\section{Considerations Regarding 5G NR codes}\label{considerations}
	
        In this work, we use the 5G NR coding standard in our simulations because of its strong practical relevance as one of the fastest-growing communication standards today.
Therefore, before presenting the simulation results, it is important to discuss some aspects of 5G NR LDPC codes related to latency and computational complexity in the context of informed dynamic scheduling (IDS) methods.

In particular, previous works such as \cite{Chang2022} and \cite{Wang2020} have identified structural features of 5G NR LDPC codes that can affect the efficiency of IDS-based decoders.
\cite{Chang2022} argues that the presence of degree-1 VNs in the 5G NR Tanner graph leads to a disproportional selection of these VNs when IDS strategies are employed.
As a result, C2V messages associated with degree-1 VNs are more likely to be selected and updated than other C2V messages connected to the same CN\@.
However, allocating decoding resources to repeatedly update degree-1 VNs reduces decoding efficiency, since these nodes can only contribute their intrinsic channel information to the connected CN\@.
To mitigate this effect, \cite{Chang2022} introduce the degree-1 VN (D1VN) processing scheme, which limits the updates associated with degree-1 VNs.

Another issue related to 5G NR LDPC codes was discussed in \cite{Wang2020}, namely the impact of punctured VNs on the code graph.
Since the decoder pads zeros at the punctured locations, if a CN \(c_i\) has two punctured VNs, all messages transmitted from \(c_i\) to its neighboring VNs become zero.
Furthermore, if a CN is connected to only one punctured VN, the corresponding punctured symbol can be recovered after a single update.
Based on this observation, \cite{Wang2020} proposed prioritizing the update of such CNs through a scheduling strategy called least-punctured (LP) scheduling.

These observations are important for practical implementations of 5G NR systems.
However, since the primary objective of this work is to compare new RBP-based scheduling strategies with existing RBP-like methods under a common framework,
we adopt standard decoding settings without incorporating additional mechanisms  specifically designed for particular code structures, such as those discussed above.
This ensures a fair and consistent comparison among algorithms.

    \section{Implementation and Parallelization}\label{Implementation}

        Several works have investigated different implementations of LDPC codes for real-world applications \cite{Wang2007a,LeGal2016,Hailes2016,Wu2019a,Xu2019,Behera2020,Liu2020,Yang2021,Tran-Thi2023,Benhayoun2023,Sy2023,Yun2025}.
As discussed in \cite{Benhayoun2023}, two broad approaches can be considered for LDPC decoder implementations: hardware-based and software-based implementations.
Hardware implementations based on application-specific integrated circuits (ASICs) or field-programmable gate arrays (FPGAs) can achieve low latency and high throughput, but generally require greater development effort and hardware resources.
In a survey of LDPC decoder implementations on FPGAs, the authors of \cite{Hailes2016} point out that considerable effort has been devoted to designs that exploit the flexibility, high processing speed, and parallelism of FPGAs.
These characteristics make \mbox{FPGA}s particularly attractive for design prototyping and for the manufacturing of small production-run devices.
Their in-system programmability can also make them more cost-effective than ASICs in such applications.
Examples of FPGA implementations can be found in \cite{Behera2020,Liu2020,Yang2021,Tran-Thi2023}.

The second approach discussed in \cite{Benhayoun2023} is software-based implementation, in which the decoding algorithm is implemented in a programming language, compiled into a binary program, and executed on a general-purpose processor (GPP).
Software implementations can reduce the hardware resources and development time required for deployment and allow functionalities to be modified or updated without changing the underlying hardware.
Examples of software-based LDPC decoder implementations are presented in \cite{LeGal2016,Xu2019,Benhayoun2023,Sy2023,Yun2025}.
According to \cite{Sy2023}, the development of software targets such as graphics processing units (GPUs) and general-purpose processors has enabled programmable, reconfigurable and scalable software-based LDPC decoders.
Moreover, the integration of vector units and instruction-set extensions in modern processors can significantly reduce execution time and improve the efficiency of parallel processing.
For example, \cite{Xu2019} exploits the AVX-512 instruction set to accelerate LDPC decoding.

\subsection{Parallelization and IDS methods}

Due to its parallel nature, most works on the implementation of LDPC decoders have focused on the original flooding scheme (see \cite{LeGal2016}).
Others have exploited the semi-parallel features of quasi-cyclic LDPC codes for efficient implementation of layered decoding processing schemes, most prominently the LBP algorithm \cite{Wang2007a,Liu2020,Yang2021,Tran-Thi2023}.

Given the dynamic nature of IDS methods, such as the RBP-based algorithms considered in this work, a full or partial parallelization of these methods is considerably challenging.
As pointed out by the authors of the original RBP algorithm in \cite{Casado2010}, the principle of performing an ordered sequence of updates while exploiting the most recent information as much as possible is not, in principle, compatible with simultaneously computing multiple messages.

Nevertheless, the serial nature of IDS methods can be relaxed: instead of updating only the largest residual in each algorithm loop, the \(p\) largest residuals could be updated in parallel.
This idea was suggested by the authors of the original RBP algorithms in \cite{Casado2010}.
More recently, the authors of \cite{Chang2021} followed this principle and proposed a parallel version of their CI-RBP.
This approach, however, generally leads to a degradation in performance, which tends to become more pronounced as \(p\) increases.
Therefore, there is a trade-off between parallelization and performance regarding IDS methods, which becomes particularly relevant when considering decoding latency: although parallelization can reduce the processing time per iteration, the associated performance degradation may require additional algorithm iterations to achieve the same error-rate performance, potentially offsetting part of the latency reduction.  
    
	\section{Simulation Results}\label{Simul}

\begin{figure}
    \centering
    \includegraphics[trim={2 0 8 0}, clip, width=\linewidth]{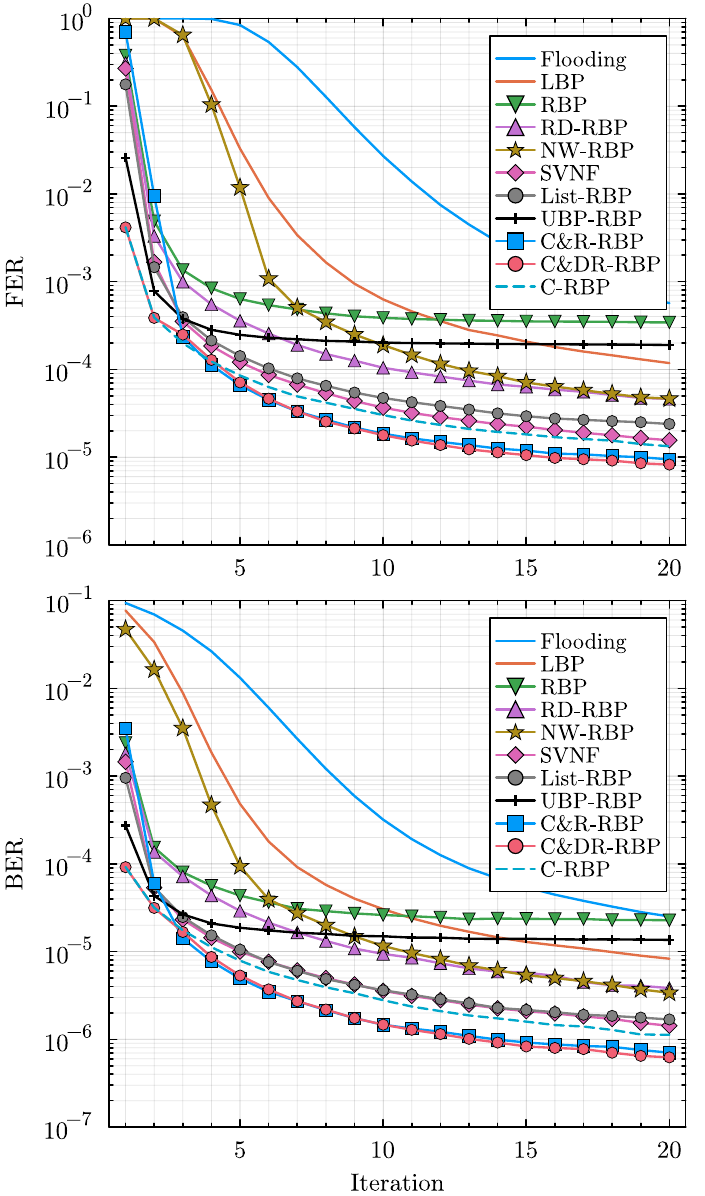}
    \caption{Simulation~1: FER/BER x Number of Iterations (\(E_b\!/N_0 = 3.0\)~dB), where a 5G NR \(576\)-code with \(R = 1/2\) is considered (\(\beta = 0.85\)).}
    \label{fig:sim1}
\end{figure}

\begin{figure}
    \centering
    \includegraphics[trim={8 0 8 0}, clip, width=\linewidth]{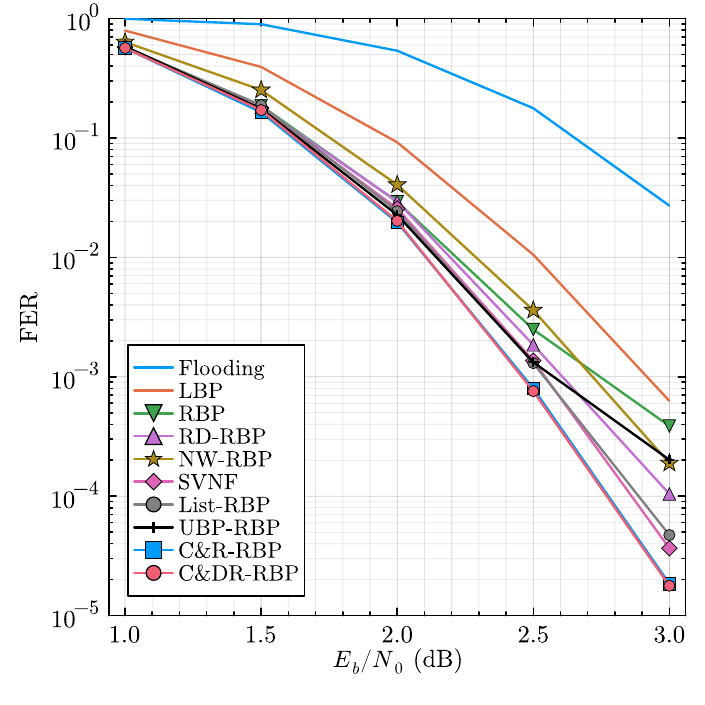}
    \caption{Simulation 1: FER x \(E_b\!/N_0\) for 10 algorithm iterations, where a 5G NR \(576\)-code with \(R = 1/2\) is considered (\(\beta = 0.85\)).}
    \label{fig:simdb}
\end{figure}

\begin{figure}
    \centering
    \includegraphics[trim={2 0 8 0}, clip,width=\linewidth]{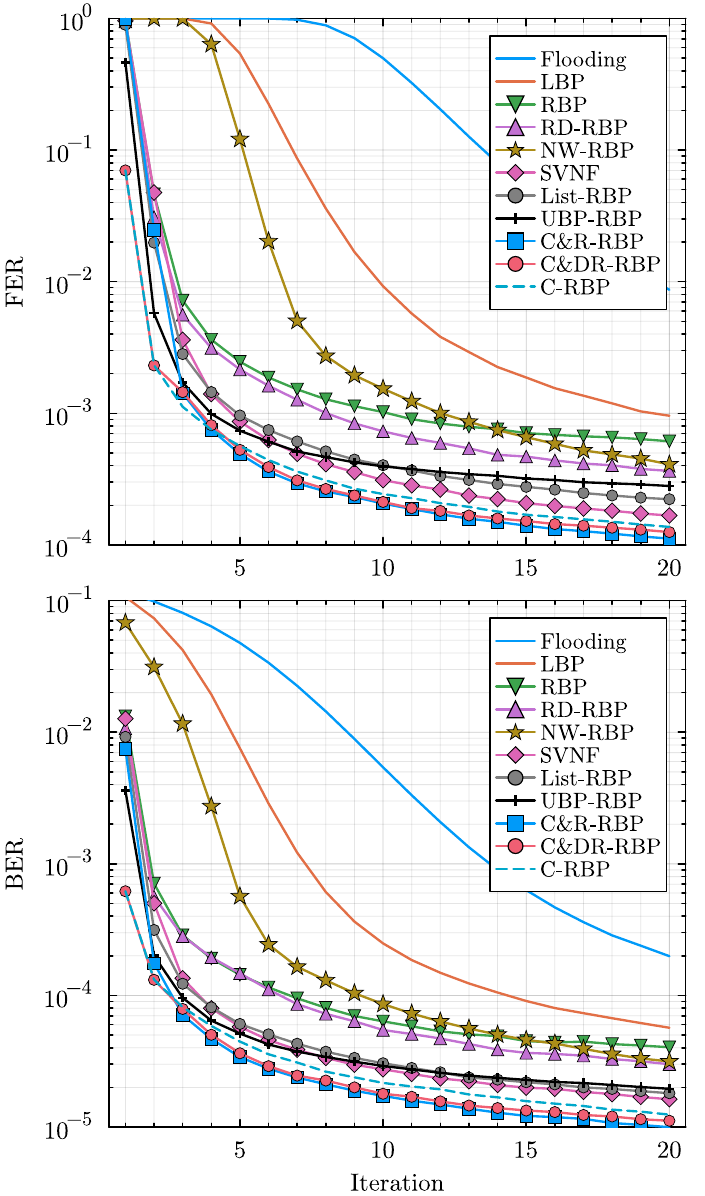}
    \caption{Simulation~2: FER/BER x Number of Iterations (\(E_b\!/N_0 = 2.0\)~dB), where a 5G NR \(1248\)-code with \(R = 1/2\) is considered (\(\beta = 0.85\)).}
    \label{fig:sim2}
\end{figure}

\begin{figure}
    \centering
    \includegraphics[trim={8 0 7 0}, clip,width=\linewidth]{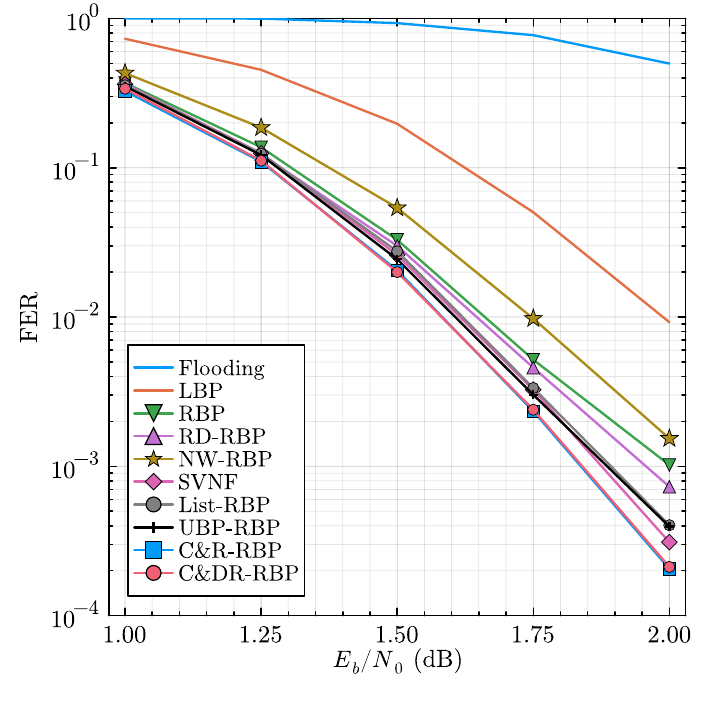}
    \caption{Simulation~2: FER x \(E_b\!/N_0\) for 10 algorithm iterations, where a 5G NR \(1248\)-code with \(R = 1/2\) is considered (\(\beta = 0.85\)).}
    \label{fig:simdb2}
\end{figure}


\begin{figure}
    \centering
    \includegraphics[trim={2 0 8 0}, clip,width=\linewidth]{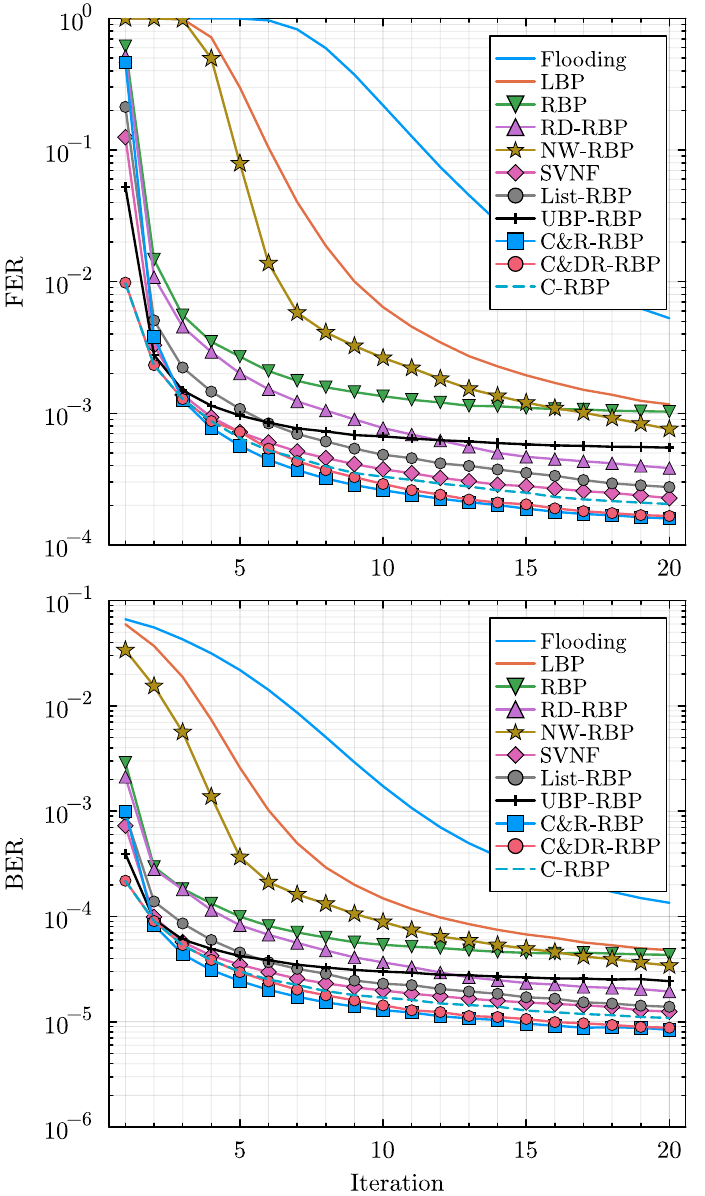}
    \caption{Simulation~3: FER/BER x Number of Iterations (\(E_b\!/N_0 = 2.8\)~dB), where a 5G NR \(1248\)-code with \(R = 2/3\) is considered (\(\beta = 0.85\)).}
    \label{fig:sim3}
\end{figure}

\begin{figure}
    \centering
    \includegraphics[trim={8 0 8 0}, clip,width=\linewidth]{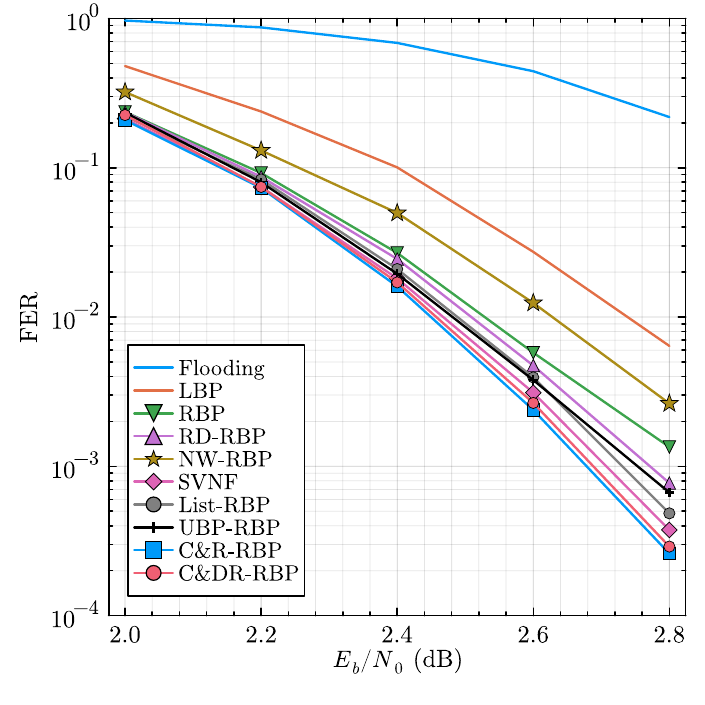}
    \caption{Simulation~3: FER x \(E_b\!/N_0\) for 10 algorithm iterations, where a 5G NR \(N = 1248\)-code with \(R = 2/3\) is considered (\(\beta = 0.85\)).}
    \label{fig:simdb3}
\end{figure}


\begin{figure}
    \centering
    \includegraphics[trim={2 0 8 0}, clip,width=\linewidth]{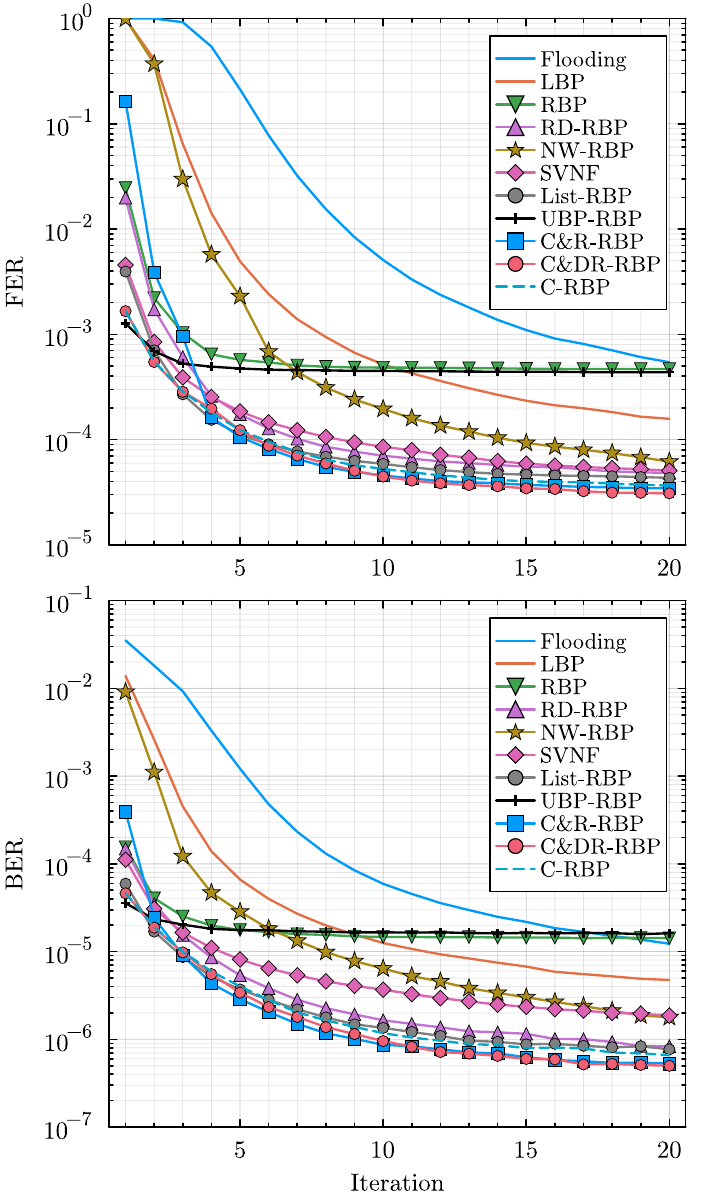}
    \caption{Simulation~4: FER/BER x No.~of Iterations (\(E_b\!/N_0 = 5.0\)~dB), where a 5G NR \(576\)-code with \(R = 5/6\) is considered (\(\beta = 0.85\)).}
    \label{fig:sim4}
\end{figure}

\begin{figure}
    \centering
    \includegraphics[trim={8 0 7 0}, clip,width=\linewidth]{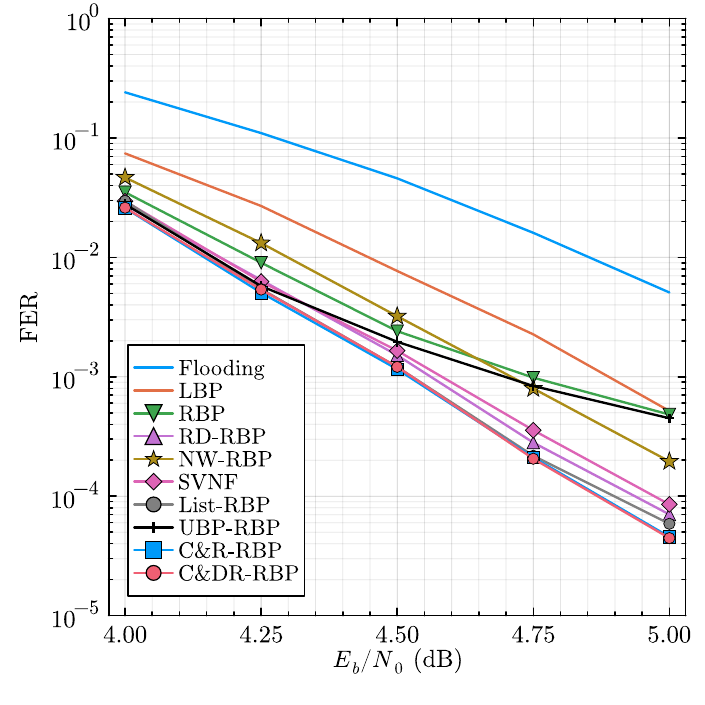}
    \caption{Simulation~4: FER x \(E_b/N_0\) for 10 algorithm iterations, where a 5G NR \(576\)-code with  \(R = 5/6\) is considered (\(\beta = 0.85\)).}
    \label{fig:simdb4}
\end{figure}


 \begin{figure}
    \centering
    \includegraphics[trim={2 0 8 0}, clip,width=\linewidth]{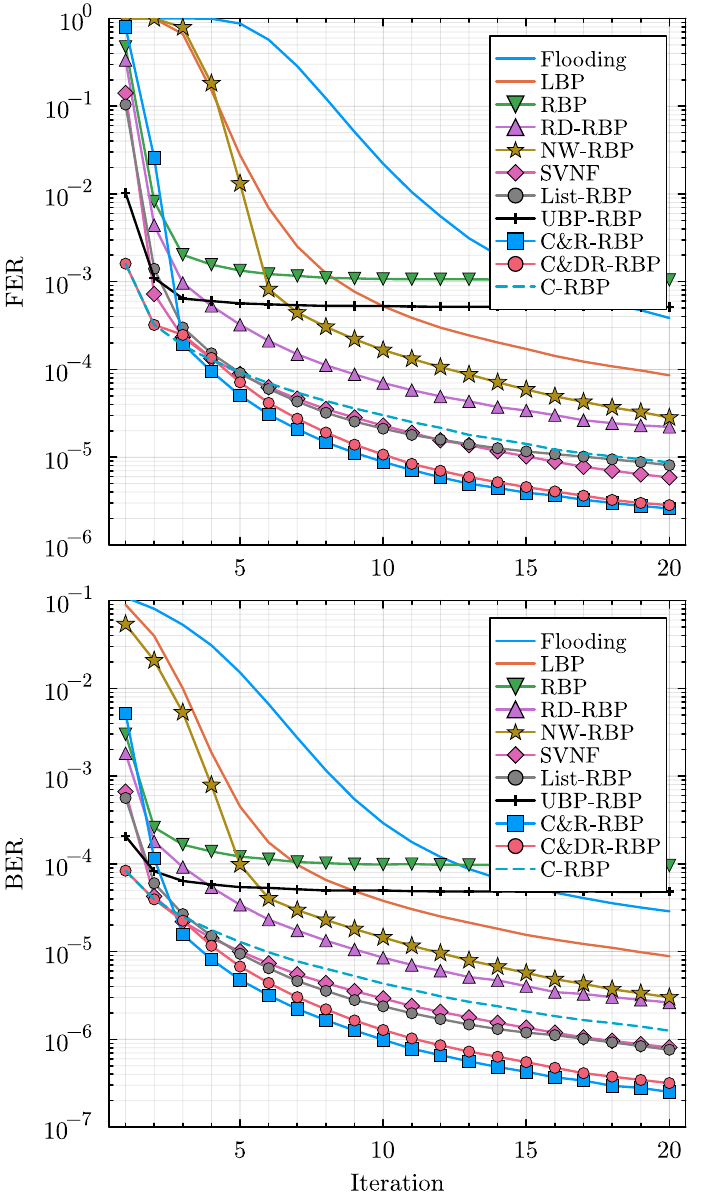}
    \caption{Simulation~5: FER/BER x No.~of Iterations under a Rayleigh fading channel (\(E_b\!/N_0 = 5.0\)~dB), where a 5G NR \(576\)-code with \(R = 1/2\) is considered (\(\beta = 0.85\)).}
    \label{fig:sim5}
\end{figure}

\begin{figure}
    \centering
    \includegraphics[trim={8 0 8 0}, clip,width=\linewidth]{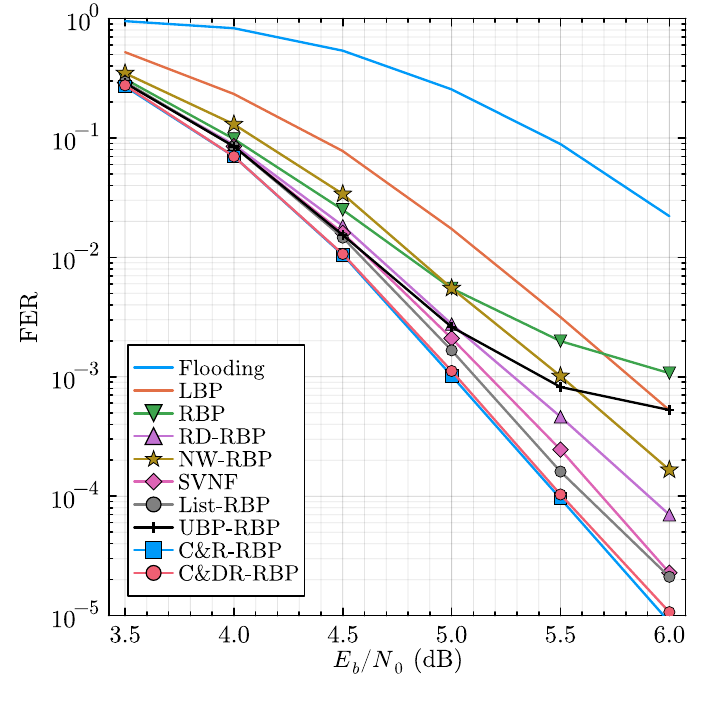}
    \caption{Simulation~5: FER x \(E_b/N_0\) under a Rayleigh fading channel model, for 10 algorithm iterations and where a 5G NR \(576\)-code with  \(R = 1/2\) is considered (\(\beta = 0.85\)).}
    \label{fig:simdb5}
\end{figure}

This section presents a performance evaluation of the proposed algorithms under different LDPC code configurations and channel conditions.
The proposed algorithms C\&R-RBP and C\&DR-RBP are compared with state-of-the-art decoding techniques in terms of frame error rate (FER) and bit error rate (BER).

\subsection{General Framework}

We evaluate the performance of the proposed algorithms through simulations using 5G NR LDPC codes.
The parity-check matrix \(\bmH\) is obtained by cyclic lifting from one of two possible base matrices, also referred to as base graphs (BG): (\textit{i}) BG1 for code lengths between 300 and 8448 and code rates between 1/3 and 8/9, and (\textit{ii}) BG2 for code lengths between 40 and 3840 and code rates between 1/5 and 2/3.
The lifting process is defined by the lifting number \(Z\) and the set index \(i_{LS}\muspace\).
The desired code rate is achieved (matched) by puncturing the least significant parity bits (see \cite{3GPP2025} for additional details).

We compare the performance of C\&R-RBP and C\&DR-RBP with RBP-based algorithm that have the same computational complexity in terms of residual calculations (see Table~\ref{tab:complexity}).
The performance curves presented in this section were obtained from Monte Carlo simulations using a sufficient number of trials to observe 252 frame errors at the last iteration (\(i = 20\)).
For the residual-decaying strategy, we follow \cite{Zhang2019} and \cite{Yan2024} and use a residual decaying factor \(\beta = 0.85\), which lies at the center of the recommended interval \((0.8,0.9)\), reported to provide the best performance.
This value is used for both proposed methods, as well for RD-RBP and List-RBP\footnote{For List-RBP, we empirically found the best performance with list sizes 16 and 2, which contrasts with the values suggested in \cite{Yan2024}.}.
We summarize the different code lengths, rates and \(E_b/N_0\) ranges for each simulation in Table~\ref{tab:simulations}.

\begin{table}[ht]
    \centering\footnotesize
    \caption{Code Lengths, rates and \(E_b/N_0\) ranges for each simulation.}
    \begin{tabular}{ccccc}
        \toprule
        Simulation & Code Length & Rate & \(E_b/N_0\) range & Figures \\
        \midrule
        1 & 576  & 1/2 & 1.0 : 3.0 & \ref{fig:sim1}, \ref{fig:simdb}\\
        2 & 1248 & 1/2 & 1.0 : 2.0 & \ref{fig:sim2}, \ref{fig:simdb2}\\
        3 & 1248 & 2/3 & 2.0 : 2.8 & \ref{fig:sim3}, \ref{fig:simdb3}\\
        4 & 576  & 5/6 & 4.0 : 5.0 & \ref{fig:sim4}, \ref{fig:simdb4}\\
        5 (fading) & 576  & 1/2 & 4.0 : 5.0 & \ref{fig:sim5}, \ref{fig:simdb5}\\
        \bottomrule
    \end{tabular}
    \label{tab:simulations}
\end{table}

\subsection{Simulation 1: Code Length = 576, R = 1/2}\label{Simul1}

Figure~\ref{fig:sim1} presents the resulting FER and BER for a rate-1/2 length-576 code at \(E_b/N_0 = 3.0\)~dB.
According to the 3GPP procedure described in \cite{3GPP2025}, this code configuration corresponds to base matrix BG2, lifting size \(Z = 40\) and set index \(i_{LS} = 2\).
For C\&DR-RBP, the \emph{Return} strategy is activated at iteration \(i_0 = 3\), chosen heuristically\footnote{See Appendix~\ref{appE} for a discussion on the sensitivity of C\&DR-RBP to the choice of \(i_0\).}.
As shown in Figure~\ref{fig:sim1}, the C\&R-RBP outperforms the competing algorithms for \(i \ge 3\) under both FER and BER metrics.
In contrast, C\&DR-RBP exhibits a noticeable performance improvement over C\&R-RBP during initial iterations and outperforms all competing algorithms at every iteration.
It is also important to note that C\&DR-RBP requires considerably fewer iterations than the other algorithms to achieve a given error-rate performance.
These results corroborate the conclusions of the formal analysis of the \emph{Consensus} strategy presented in Subsection~\ref{Math}.
Additionally, Figure~\ref{fig:simdb} presents the decoding performance in terms of FER for different values of \(E_b/N_0\) at iteration \(i = 10\).

To demonstrate that the \emph{Return} strategy is necessary to maintain the performance improvement, we included the curves corresponding to the case in which only the \emph{Consensus} strategy is applied, referred to as C-RBP (dashed cyan curve in Figure~\ref{fig:sim1}).
Observe that C\&DR-RBP behaves identically to C-RBP during the initial phase (\(i < i_0\)).
However, the importance of the return mechanism becomes evident at later iterations, as C-RBP exhibits a degraded performance for \(i > i_0\) compared with both C\&R-RBP and C\&DR-RBP\@.
This result shows that, although the \emph{Return} step may not yield the best  performance during the first decoding iterations, as suggested by the performance of C\&R-RBP, it is essential to maintain improved decoding performance at later iterations.

\subsection{Simulation 2: Code Length = 1248, R = 1/2}

We repeat the same simulation framework for a rate-1/2, length-1248 code (base graph BG2, lifting size \(Z = 72\), and set index \(i_{LS} = 4\)), and \(E_b/N_0 = 2.0\)~dB.
The corresponding results are shown in Figure~\ref{fig:sim2}.
For C\&DR-RBP, the \emph{Return} strategy is activated again at iteration \(i_0 = 3\).
As shown in Figure~\ref{fig:sim2}, the results are qualitatively similar to those obtained for the shorter code length.
C\&R-RBP outperforms the competing algorithms for \(i \ge 3\), whereas C\&DR-RBP achieves superior performance at every iteration while requiring fewer iterations to achieve a given error-rate performance.
Figure~\ref{fig:simdb2} additionally presents the decoding performance in terms of FER for different values of \(E_b/N_0\) after 10 iterations for the rate-1/2 length-1248 code.

\subsection{Simulation 3: Code Length = 1248, R = 2/3}

We also evaluated the proposed algorithms using a rate-2/3 length-1248 code (base graph BG1, lifting size \(Z = 40\), and set index \(i_{LS} = 2\)) and \(E_b/N_0 = 2.8\)~dB.
The corresponding results are presented in Figure~\ref{fig:sim3}.
For C\&DR-RBP, the \emph{Return} strategy is activated at iteration \(i_0 = 5\), which yields the best performance.
As shown in Figure~\ref{fig:sim3}, the C\&R-RBP again outperforms the competing algorithms for \(i \ge 3\).
On the other hand, C\&DR-RBP achieves superior performance at nearly every iteration compared with the competing algorithms.
The only exception occurs at iteration \(i = 5\) in terms of FER, where its performance numerically matches that of SVNF\@.
However, C\&DR-RBP achieves superior BER performance at every iteration.
Figure~\ref{fig:simdb3} additionally presents the decoding performance in terms of FER for different values of \(E_b/N_0\) after 10 iterations for the rate-2/3 length-1248 code.

\subsection{Simulation 4: Code Length = 576, R = 5/6}

In this setting, we simulated a rate-5/6 length-576 code (base graph BG1, lifting size \(Z = 22\), and set index \(i_{LS} = 5\)) and \(E_b/N_0 = 5.0\)~dB.
The corresponding results are presented in Figure~\ref{fig:sim4}.
In this high-SNR high-rate scenario, C\&R-RBP does not perform well for the first iterations in terms of FER (\(i \le 3\)), but its error-rate performance improves for \(i \ge 4\) for FER and \(i \ge 3\) for BER.
On the other hand, C\&DR-RBP, with \(i_0 = 4\) chosen heuristically, shows again the most competitive performance both in terms of FER and BER.
Figure~\ref{fig:simdb3} additionally presents the decoding performance in terms of FER for different values of \(E_b/N_0\) after 10 iterations for the rate-5/6 length-576 code.

\subsection{Simulation 5: Code Length = 576, R = 1/2 (Fading Channel)}

We also consider a more challenging scenario involving an uncorrelated Rayleigh fading channel model, in which the received signal is given by \(y_j = \gamma_j \, u_j + w_j\), where
\(\gamma_j\) is a realization of the random variable \(\Gamma_j = \sqrt{X_1^2 + X_2^2}\),
with \(X_1 \sim \cN(0,1)\) and \(X_2 \sim \cN(0,1)\).
We use the same rate-1/2 length-576 code considered in Simulation~1, with \(E_b/N_0 = 6.0\)~dB.
The corresponding results are presented in Figure~\ref{fig:sim5}.
For C\&DR-RBP, the \emph{Return} strategy is activated at iteration \(i_0 = 3\).
As shown in Figure~\ref{fig:sim5}, C\&R-RBP once again outperforms the competing algorithms for \(i \ge 3\).
In contrast, C\&DR-RBP achieves superior performance at every iteration compared with the competing algorithms, both in terms of FER and BER\@.
Figure~\ref{fig:simdb5} additionally presents the decoding performance in terms of FER for different values of \(E_b/N_0\) after 10 iterations for the rate-1/2 length-576 code under the Rayleigh fading channel model.

\subsection{Comparison Tables}

To help compare the performance results, we included in Tables~\ref{tab:FER} and \ref{tab:BER} the number of iterations necessary for each RBP-based algorithm to achieve a given error-rate performance for each of the simulation settings considered (denoted by S1 up to S5).
For example, column S1 of Table~\ref{tab:FER} shows the number of iterations needed for each algorithm to achieve a FER \(< 10^{-4}\) in Simulation~1.
The lowest values are highlighted.

\begin{table}[ht]
    \centering\footnotesize
    \caption{Number of iterations to achieve a given FER.}
    \begin{tabular}{cccccc}
        \toprule
         & S1 & S2 & S3 & S4 & S5 \\
         \midrule
         FER & \(10^{-4}\) & \(10^{-3}\) & \(10^{-3}\) & \(10^{-4}\) & \(10^{-4}\) \\
         \midrule
         RBP        & \(>\!\!20\) & 11 & \(>\!\!20\) & \(>\!\!20\) & \(>\!\!20\)\\
         RD-RBP     & 10 & 9 & 9 & 8 & 9\\
         NW-RBP     & 13 & 13 & 17 & 15 & 13\\
         SVNF       & 6  & 5 & \textbf{4} & 9 & 5 \\
         List-RBP   & 7  & 5 & 6 & \textbf{6} & 5\\
         UBP-RBP    & \(>\!\!20\) & 5 & 5 & \(>\!\!20\) & \(>\!\!20\) \\
         C\&R-RBP   & \textbf{5}  & \textbf{4} & \textbf{4} & \textbf{6} & \textbf{4}\\
         C\&DR-RBP  & \textbf{5} & \textbf{4} & \textbf{4} & \textbf{6} & 5\\
         \bottomrule
    \end{tabular}
    \label{tab:FER}
\end{table}

\begin{table}[!ht]
    \centering\footnotesize
    \caption{Number of iterations to achieve a given BER.}
    \begin{tabular}{cccccc}
        \toprule
         & S1 & S2 & S3 & S4 & S5 \\
         \midrule
         BER & \(10^{-5}\) & \(10^{-4}\) & \(10^{-4}\) & \(10^{-5}\) & \(10^{-5}\) \\
         \midrule
         RBP        & \(>\!\!20\) & 8 & 6 & \(>\!\!20\) & \(>\!\!20\)\\
         RD-RBP     & 10 & 7 & 5 & 4 & 10\\
         NW-RBP     & 11 & 10 & 10 & 9 & 13\\
         SVNF       & 6  & 4 & 3 & 5 & 6 \\
         List-RBP   & 6  & 4 & 3 & \textbf{4} & 6\\
         UBP-RBP    & \(>\!\!20\) & \textbf{3} & 3 & \(>\!\!20\) & \(>\!\!20\) \\
         C\&R-RBP   & \textbf{4}  & \textbf{3} & \textbf{2} & \textbf{4} & \textbf{4}\\
         C\&DR-RBP  & \textbf{4}  & \textbf{3} & \textbf{2} & \textbf{4} & 5\\
         \bottomrule
    \end{tabular}
    \label{tab:BER}
\end{table}  

    \section{Decoding Latency and Memory Analysis}\label{Latency}

        Following the model in \cite{Celebi2021}, we consider the aggregate latency to consist of three main components: \textit{i}) latency at the transmitter, \textit{ii}) transmission and propagation latencies, and \textit{iii}) latency at the receiver.
Transmission latency is the time required to send all the symbols of a codeword over the channel.
The latencies at the transmitter and receiver can be further divided into latency due to encoding, buffering, signal processing, interleaving, and decoding.
Except for encoding and decoding, these latencies are independent of the particular coding/decoding scheme; therefore, they contribute a constant term that is neglected in the analysis that follows.
As discussed earlier, this work focuses on reducing the decoding latency of IDS methods by reducing the number of algorithm iterations required to achieve a given error-rate performance.

Given the dynamic nature of the RBP-based scheduling considered in this work, we focus our latency analysis on a GPP-oriented implementation model, which provides a convenient framework for estimating the decoding latency of the proposed algorithms.
Although the decoding latency depends on several factors, such as the particular hardware platform and preprocessing operations, we follow \cite{Celebi2021} and assume that the decoding operations are handled sequentially by a processor.
Under this assumption, the decoding time is linearly related to the processor clock-cycle duration, \(1/f_\text{clk}\), where \(f_{\text{clk}}\) is the processor clock frequency.
Thus, the decoding latency \(L_d\) can be estimated as
\begin{equation}\label{latency1}
    L_d = \frac{I \muspace N_{\text{ops}} }{f_{\text{clk}}} \times \text{PCO},
\end{equation}
where \(I\) is the number of decoding iterations, \(N_{\text{ops}}\) is the total average number of operations per iteration\footnote{Here, an operation refers broadly to either an arithmetic operation or a memory access.}, and PCO is the \emph{average} number of \emph{processor cycles per operation} for a given processor.
In other words, PCO is an aggregate weighted average that approximately accounts for the different execution costs of operations, such as additions, multiplications, and memory accesses.
This avoids explicitly modeling the number of processor cycles required by each individual operation.
Since our goal is to compare the decoding latencies of different algorithms under the same framework, we set PCO \(=1\) throughout the following analysis to simplify the calculations.
Therefore, the decoding latencies for any other value of PCO can be obtained by multiplying the reported results by that value.

\subsection{Estimating the Number of Operations}\label{Nops}

\subsubsection{C2V and Residual Calculation}

Let us consider the alternative definition of C2V messages given in \eqref{c2valt}.
Thus, the C2V evaluation can be divided into the following steps.
For each edge \((c_i,v_j)\):
\begin{enumerate}
    \item Read \((\bard_c - 1)\) V2C messages from memory.
    \item Compute the absolute values (\(\beta_{v_k}\)) and signs (\(\alpha_{v_k}\)) of these messages.
    \item Calculate \(\phi(\beta_{v_k})\) for each V2C message.
    \item Compute the sum of the \((\bard_c - 1)\) quantities \(\phi(\beta_{v_k})\).
    \item Calculate \(\phi(\cdot)\) over the result from the last step.
    \item Compute the product of the \((\bard_c - 1)\) signs \(\alpha_{v_k}\).
    \item Multiply the results of steps 5 and 6 and store the result in memory.
\end{enumerate}

The corresponding number of operations, considering that \(\phi(\cdot)\) requires \(K\) operations, is summarized in Table~\ref{tab:C2V}.

\begin{table}[ht]
    \centering\footnotesize
    \caption{Estimated number of operations for C2V evaluation.}
    \begin{tabular}{cc}
        \toprule
         Step & Number of operations \\
         \midrule
         1 & \(\bard_c - 1\) \\
         2 & \(2(\bard_c-1)\) \\
         3 & \(K(\bard_c-1)\) \\
         4 & \(\bard_c-2\) \\
         5 & \(K\) \\
         6 & \(\bard_c-2\) \\
         7 & \(2\) \\
        \midrule
         Total & \((K + 5)\bard_c - 5\) \\
         \bottomrule
    \end{tabular}
    \label{tab:C2V}
\end{table}

To calculate residuals, consider the following additional steps: (\textit{i}) read \(\cv{i}{j}\) from memory, (\textit{ii}) calculate \(\residual{i}{j}\) using \eqref{res}, and (\textit{iii}) write \(\residual{i}{j}\) to memory.
These steps add 4 operations to the total number in Table~\ref{tab:C2V}, resulting in
\begin{equation}\label{res_ops}
    (K + 5)\bard_c - 1.
\end{equation}

\subsubsection{Approximation of Function $\phi$}

Evaluating \(\phi(\cdot)\) is computationally expensive and may require significant hardware resources.
To reduce the implementation complexity and maintain the calculation accuracy as much as possible, an approximation of \(\phi(\cdot)\) is imperative.
For example, \cite{Yang2021} presents a non-uniform piecewise approximation scheme using the quadratic function \(y = ax^2 + bx + c\), which achieves decoding performance almost identical to that of the standard SPA with appropriate determination of parameters \(a\), \(b\) and \(c\) for each segment of the domain \((0,\infty)\).

Let \(S\) be the number of segments, the number of operations to evaluate \(\phi(x)\) can be estimated using the following steps:
\begin{enumerate}
    \item At most \(S-1\) comparisons to decide in which segment \(x\) lies and \(S-1\) memory accesses to recover the bounds of the segments.
    \item A total of 5 operations (3 multiplications, 2 sums) to evaluate \(ax^2 + bx + c\,\).
    \item 3 memory accesses for \(a\), \(b\), and \(c\).
    \item 1 memory write to store the result.
\end{enumerate}
Thus, considering the worst-case scenario in finding the appropriate segment of \(x\), we have
\begin{equation}\label{K}
    K=2(S-1)+9 = 2S+7.
\end{equation}

\subsubsection{Real Value Comparisons}\label{realvalue}

In order to find the maximum residual at each algorithm loop in RBP-based algorithms, the theoretical number of real-value comparisons is \(E-1\) (see Table~\ref{tab:complexity}).
Accounting for \(E\) memory accesses, this results in \(2E - 1\) operations.
However, this number can be significantly reduced in a practical implementation if we record the maximum residual for each check node \(c_i\muspace\) (similarly to NW-RBP).
In other words, if we record
\begin{equation}\label{omega}
    \omega_i = \max_{v_j \in \cV(c_i)} \residual{i}{j}.
\end{equation}
Given \(\omega_i\muspace\) for each \(c_i\), we can perform the following steps to find the maximum residual and maintain \(\omega_i\) updated:
\begin{enumerate}
    \item Read all \(\omega_i\) from memory.
    \item Find \(i^* = \argmax_i \omega_i\muspace\).
    \item Read, for all \(v_j \in \cV(c_{i^*})\), \(\residual{i^*}{j}\) from memory.
    \item Find the maximum residual among \(\residual{i^*}{j}\).
    \item Find the new \(\omega_i\) considering  only \(v_j \in \cV(c_{i^*})\setminus\{v_{j^*}\}\), and store it in memory.
    \item To maintain the values of \(\omega_i\) up to date, compare every newly calculated residual with the corresponding \(\omega_i\muspace\).
\end{enumerate}
The corresponding number of operations is given in Table~\ref{tab:comparison_calc}.
\begin{table}[ht]
    \centering\footnotesize
    \caption{Estimated number of operations to find the largest residual.}
    \begin{tabular}{cc}
        \toprule
        Step & Number of operations \\
        \midrule
        1 & \(M\) \\
        2 & \(M-1\) \\
        3 & \(\bard_c\) \\
        4 & \(\bard_c-1\) \\
        5 & \(\bard_c-1\) \\
        6 & \((\bard_v-1)(\bard_c+1)\) \\
        \midrule
        Total & \(2M + (\bard_c + 1)(\bard_v+2) - 6\) \\
        \bottomrule
    \end{tabular}
    \label{tab:comparison_calc}
\end{table}

Step 6 above requires a more detailed description.
If the newly calculated residual is greater than \(\omega_i\muspace\), then \(\omega_i\muspace\) must be updated with this new value.
Thus, for each \(c_i\muspace\) (see Line~11 in Algorithm~\ref{alg1}), \(\bard_c-1\) comparisons with \(\omega_i\muspace\) are required, corresponding to the number of newly calculated residuals shown in Fig.~\ref{Fig:RBP}.
In addition, two memory accesses are required to read \(\omega_i\muspace\) and write its updated value back to memory.
Thus, each \(c_i\) contributes \(\bard_c+1\) operations.
Therefore, for each C2V update, we have \((\bard_v-1)(\bard_c+1)\) additional comparisons.

\subsubsection{Global estimate for RBP}

For the global estimate of the number of operations per algorithm loop, we consider the following:
\begin{enumerate}
    \item The V2C updates, which require, according to \eqref{v2c}, (\textit{i}) \(\bard_v\) memory accesses, (\textit{ii}) \(\bard_v - 1\) additions, and (\textit{iii}) one memory write, resulting in a total of \(2\bard_v\) operations.
    \item The C2V update, given in the last row of Table~\ref{tab:C2V}.
\end{enumerate}
The second column of Table~\ref{tab:global_calc} summarizes the estimated average number of operations for RBP.
Now, considering a \(576\)-code with \(R = 1/2\), according to the 3GPP procedure in \cite{3GPP2025}, this code configuration yields\footnote{Note that \(N \ne 576\) and \(M \ne 288\) due to the resulting \(2\muspace Z = 80\) punctured nodes in the 5G NR parity-check matrix.}
\begin{itemize}
    \item \(M = 352\), \(N = 656\), and \(E = 2088\).
    \item \(\bard_c = 5.9319\) and \(\bard_v = 3.1829\).
\end{itemize}
Also, taking \(S\! =\! 4\) in \eqref{K} as suggested in \cite{Yang2021}, we have \(K \!=\! 15\) operations.
Substituting these values into the second column of Table~\ref{tab:global_calc}, we arrive at the numerical values displayed in the third column.
Note that residual calculations and real-value comparisons are the most significant contributors to the number of operations.
Also note that the total number of real-value comparisons is significantly smaller than the theoretical value \(2E-1 = 4175\), as discussed.

\begin{table}[ht]
    \centering\footnotesize
    \caption{Estimated number of operations for RBP.}
    \begin{tabular}{ccc}
        \toprule
        Operation & Estimated number of operations & Results \\
        \midrule
        C2V update & \((K + 5)\bard_c - 5\) & \(113.64\)\\
        V2C updates & \(2\bard_v (\bard_v-1)\) & \(13.90\)\\
        Residuals & \(((K + 5)\bard_c - 1) (\bard_v-1)(\bard_c-1)\) & \(1266.45\) \\
        Comparisons & \(2M + (\bard_c + 1)(\bard_v+2) - 6\) & \(733.93\)\\
        \midrule
        Total &  & \(2127.91\) \\
        \bottomrule
    \end{tabular}
    \label{tab:global_calc}
\end{table}

\subsubsection{Global estimate for RD-RBP}

In addition to the operations required for the C2V update, RD-RBP must account for the update of the decay coefficient \(\beta_{i,j}\muspace\).
Thus, additional \(3\) operations are needed in the C2V update to (\textit{i}) read \(\beta_{i,\muspace j}\) from memory, (\textit{ii}) multiply it by \(\beta\) and (\textit{iii}) write it back to memory.
Hence, from the total of operations in Table~\ref{tab:C2V}, we have
\begin{equation}\label{res_ops_beta}
    (K + 5)\bard_c - 2.
\end{equation}
Also, for each residual evaluation, \(\beta_{i,\muspace j}\) must be read from memory and multiplied by \(\residual{i}{j}\).
Thus, additional \(2\) operations are added to \eqref{res_ops}, resulting in
\begin{equation}\label{res_ops_decay}
    (K + 5)\bard_c + 1.
\end{equation}        
The estimated average number of RD-RBP operations and the estimated numerical results are shown in Table~\ref{tab:RD_calc}.

\begin{table}[ht]
    \centering\footnotesize
    \caption{Estimated number of operations for RD-RBP.}
    \begin{tabular}{ccc}
        \toprule
        Operation & Estimated number of operations & Results \\
        \midrule
        C2V update & \((K + 5)\bard_c - 2\)         & \(116.64\)\\
        V2C updates & \(2\bard_v (\bard_v-1)\) & \(13.90\)\\
        Residuals & \(((K + 5)\bard_c + 1)(\bard_v-1)(\bard_c-1)\) & \(1287.98\) \\
        Comparisons & \(2M + (\bard_c + 1)(\bard_v+2) - 6\) & \(733.93\)\\
        \midrule
        Total &  & \(2152.44\) \\
        \bottomrule
    \end{tabular}
    \label{tab:RD_calc}
\end{table}  

\subsubsection{List-RBP}

For List-RBP, we need to take into account the operations necessary to maintain the main list updated.
First, at each iteration, the edge corresponding to the largest residual must be removed from the top of the list.
Then, the list is reordered.
We estimate \(4(S_1-1)\) operations, which correspond to the total number of reads from and writes to memory of each edge \((c_i,v_j)\) and each residual \(\residual{i}{j}\) on the list.

Second, each time a new residual \(\residual{i}{j}\) is calculated, we must check whether the corresponding edge \((c_i,v_j)\) is already on the main list.
If so, the edge is removed from the list and this is reordered accordingly.
Let \(q\) be the position of the residual on the list; thus, there must be \(q\) reads from memory and \(q\) comparisons to find the position.
Then, \(S_1 - q\) reads and \(S_1 - q\) writes to memory are necessary to reorder the list, making a total of \(2S_1\) operations.
Note that if the edge is not on the main list, the total number of operations is also \(2S_1\).
Therefore, \(2S_1(\bard_v-1)(\bard_c-1)\) operations are necessary.

Table~\ref{tab:list_calc} shows the estimated number of operations for List-RBP and the corresponding numerical results, using the same list sizes as in our simulations (\(S_1 = 16, S_2 = 2\)).

\begin{table}[ht]
    \centering\footnotesize
    \caption{Estimated number of operations for List-RBP.}
    \begin{tabular}{ccc}
        \toprule
        Operation & Estimated number of operations & Results \\
        \midrule
        C2V update & \((K + 5)\bard_c - 2\) & \(116.64\)\\
        V2C updates & \(2\bard_v(\bard_v-1)\) & \(13.90\)\\
        Residuals & \(((K + 5)\bard_c + 1)(\bard_v-1)(\bard_c-1)\) & \(1287.98\) \\
        Comparisons &  \( 2 \muspace S_2\big((\bard_v-1)(\bard_c-1) + S_1\big)\) & \(107.06\)\\
        Update List & \(4(S_1-1) + 2S_1(\bard_v-1)(\bard_c-1)\) & \(404.51\)\\
        \midrule
        Total & & \(1930.08\) \\
        \bottomrule
    \end{tabular}
    \label{tab:list_calc}
\end{table}      

\subsubsection{C\&R-RBP and C\&DR-RBP estimates}

Since both algorithms store the precomputed C2V messages used to evaluate residuals, the C2V update requires:
\begin{enumerate}
    \item \(d_v\) memory accesses to retrieve the precomputed C2V messages \(\cv{i}{j}^\prime\) for \emph{Consensus}.
    \item \(d_v\) memory writes to store these values in the newly updated C2V messages \(\cv{i}{j}\) .
    \item \(3\bard_v\) operations to update the decay coefficients \(\beta_{i,k}\).
\end{enumerate}
Thus, the total number of operations is \(5\bard_v\) for the C2V update.
Additionally, we have to account for \emph{Return}:
\begin{enumerate}
    \item one additional V2C update (the \emph{returning} message), which accounts for additional \(2\bard_v\) operations.
    \item \(\bard_c-1\) additional residual calculations.
    \item \(\bard_c + 1\) additional comparisons to update \(\omega_{i^*}\) (see the discussion of step~6 in Subsection~\ref{realvalue}).
\end{enumerate}
Finally, we must account for the storage of the precomputed C2V messages \(\cv{i}{j}^\prime\); thus, the total number of operations for each residual calculation is, from \eqref{res_ops_decay},
\[
    \bard_c(K + 5) + 2.
\]

Table~\ref{tab:CR_calc} shows the estimated number of operations and the corresponding results for C\&R-RBP\@.
As expected, C\&R-RBP requires a larger number of operations than similar RBP-based IDS methods.
However, as discussed in Subsection~\ref{Complexity}, the number of algorithm loops is reduced from \(E\) to \(E-N\) in order to reduce the average number of residual calculations.
When we consider the algorithms within a complete iteration, we arrive at the values shown in Table~\ref{tab:placeholder}.
As we can see, C\&R-RBP reduces, on average, the total number of operations per iteration.

\begin{table}[ht]
    \centering\footnotesize
    \caption{Estimated number of operations for C\&R-RBP.}
    \begin{tabular}{ccc}
        \toprule
        Operation & Estimated number of operations & Results \\
        \midrule
        C2V update & \(5\bard_v\) & \(15.91\)\\
        V2C updates & \(2\bard_v\muspace\bard_v\) & \(20.26\)\\
        Residuals & \((\bard_c(K + 5)+2) \bard_v(\bard_c-1)\) & \(1893.70\) \\
        Comparisons & \(2M + (\bard_c + 1)(\bard_v+3) - 6\) & \(739.86\)\\
        \midrule
        Total &  & \(2669.74\) \\
        \bottomrule
    \end{tabular}
    
    \label{tab:CR_calc}
\end{table}

\begin{table}[ht]
    \centering\footnotesize
    \caption{Number of operations per algorithm iteration.}
    \begin{tabular}{cc}
        \toprule
        Algorithm & Operations per iteration \\
        \midrule
        RBP & \(E\times 2127.91 = 4\,443\,074\)\\
        RD-RBP & \(E\times 2152.44 = 4\,494\,296\)\\
        List-RBP & \(E\times 1930.08 = 4\,030\,012\)\\
        C\&R-RBP & \((E-N)\times 2669.74  = 3\,823\,066\) \\
        C\&DR-RBP (\(i \le i_0\)) & \(E \times 2062.48  = 4\,306\,468\) \\
        \bottomrule
    \end{tabular}
    \label{tab:placeholder}
\end{table}

Finally, Table~\ref{tab:global_calc_CDR} shows the estimates for C\&DR-RBP for \(i < i_0\).
Accordingly, the additional operations due to \emph{Return} are not considered.
\begin{table}[ht]
    \centering\footnotesize
    \caption{Estimated number of operations for C\&DR-RBP (\(i < i_0)\).}
    \begin{tabular}{ccc}
        \toprule
        Operation & Estimated number of operations & Results \\
        \midrule
        C2V update & \(5\bard_v\) & \(15.91\)\\
        V2C updates & \(2\bard_v(\bard_v-1)\) & \(13.90\)\\
        Residuals & \((\bard_c(K + 5) + 2)(\bard_v-1)(\bard_c-1)\) & \(1298.75\) \\
        Comparisons & \(2M + (\bard_c + 1)(\bard_v+2) - 6\) & \(733.93\)\\
        \midrule
        Total & \(\phantom{\big(\big)}\) & \(2062.48\) \\
        \bottomrule
    \end{tabular}
    
    \label{tab:global_calc_CDR}
\end{table}

\subsubsection{Other RBP-based algorithms}

Since NW-RBP, SVNF, and UBP-RBP do not differ significantly from RBP in terms of number of operations, we consider, for simplicity, that they have the same \(N_{\text{ops}}\) of RBP.

\subsection{Estimating the Latency}  

Following \cite{Touati2025}, we consider an ARM Cortex-A53 processor as a representative low-power processor architecture with a clock frequency \(f_{\text{clk}} = 1.2\) GHz.
To estimate the decoding latency, we consider the number of algorithm iterations required to achieve a BER \(\le 10^{-5}\) at \(E_b/N_0 = 3.0\)~dB.
From Figure~\ref{fig:sim1}, we can determine the number of iterations required by each algorithm to achieve this benchmark.
In this analysis, we consider only RBP-based IDS algorithms.
The required number of iterations is reported in the second column of Table~\ref{tab:latency}.
Using these values, Eq.~\eqref{latency1} and the corresponding \(N_{\text{ops}}\) for each algorithm\footnote{For C\&DR-RBP, the corresponding \(N_{\text{ops}}\) was calculated as a weighted mean for \(i < i_0\) and \(i \ge i_0\): \((2\times  3 406 468 + 2\times 823 066)/4 = 4 064 767\).}, we can estimate the total decoding latency, shown in the third column of Table~\ref{tab:latency}\footnote{The precomputation of the residuals (lines~3 and~4 in Algorithm~\ref{alg1}), performed by all the RBP-based algorithms, is also considered and contributes \(E((K + 5)\bard_c - 3)\) operations (the total C2V update operations in Table~\ref{tab:C2V} plus (\textit{i}) taking the absolute value and (\textit{ii}) writing the result to memory).
The total estimated contribution to decoding latency is \(0.2\)~ms.}.
Note that both C\&R-RBP and C\&DR-RBP show a significant reduction in latency compared with the most competitive algorithm in this scenario, namely List-RBP: a reduction of 36\% and 33\%, respectively.

\begin{table}[ht]
    \centering\footnotesize
    \caption{Number of iterations to achieve a BER \(\le 10^{-5}\) for each RBP-based algorithm, considering a 5G NR \(576\)-code with \(R = 1/2\) at \(E_b/N_0 = 3.0\)~dB (see Figure~\ref{fig:sim1}).}
    \begin{tabular}{cccc}
        \toprule
        Algorithm & \(N_{\text{ops}}\) & Iterations & Latency (ms) \\
        \midrule
        RBP          & 4 443 074 & \(>\)20  & \(>\)74.25  \\
        RD-RBP       & 4 494 296 & 9  & 33.91\\
        NW-RBP       & 4 443 074 & 11 & 40.93\\
        SVNF         & 4 443 074 & 7  & 26.12\\
        List-RBP     & 4 030 012 & 6  & 20.35\\
        UBP-RBP      & 4 443 074 & \(>\)20  & \(>\)74.25  \\
        C\&R-RBP     & 3 823 066 & 4  & 12.94\\
        C\&DR-RBP    & 4 064 767 & 4  & 13.55\\
        \bottomrule
    \end{tabular}
    \label{tab:latency}
\end{table}

We also considered the same decoding setting, but for the number of iterations required to achieve a BER \(\le 10^{-4}\).
Again, from Figure~\ref{fig:sim1}, we can determine the required number of iterations and calculate the corresponding latencies.
The results are given in Table~\ref{tab:latency2}.
Now we can see that C\&DR-RBP shows a more significant reduction in latency (42\%) compared to the most competitive algorithm in this scenario, namely C\&R-RBP.

\begin{table}[ht]
    \centering\footnotesize
    \caption{Number of iterations to achieve a BER \(\le 10^{-4}\) for each RBP-based algorithm, considering  a 5G NR \(576\)-code with \(R = 1/2\) at \(E_b/N_0 = 3.0\)~dB (see Figure~\ref{fig:sim1}).}
    \begin{tabular}{cccc}
        \toprule
        Algorithm & \(N_{\text{ops}}\) & Iterations & Latency (ms) \\
        \midrule
        RBP       & 4 443 074 & 3 & 11.31  \\
        RD-RBP    & 4 494 296 & 3 & 11.44  \\
        NW-RBP    & 4 443 074 & 5 & 18.71 \\
        SVNF      & 4 443 074 & 2 & 7.60  \\
        List-RBP  & 4 030 012 & 2 & 6.92  \\
        UBP-RBP   & 4 443 074 & 2 & 7.60  \\
        C\&R-RBP  & 3 823 066 & 2 & 6.57  \\
        C\&DR-RBP & 4 306 468 & 1 & 3.79  \\
        \bottomrule
    \end{tabular}
    \label{tab:latency2}
\end{table}

\subsection{Memory Requirements}\label{memory}

In general, RBP-based algorithms require memory storage for
\begin{enumerate}
    \item V2C messages (\(E\) real values).
    \item C2V messages (\(E\) real values).
    \item C2V residuals (\(E\) real values).
    \item Current LLRs (\(N\) real values).
    \item Previous LLRs (\(N\) real values).
    \item Syndrome values: \(M\) binary values.
    \item Parity-check matrix \(\bmH\): \(MN\) binary values.
    \item Message estimate: \(N\) binary values.
    \item Extra \(M\) real values to store \(\omega_i\) in \eqref{omega}.
\end{enumerate}

Assuming a 64-bit floating-point representation, the total memory requirement is approximately
\begin{equation*}
    M_{\text{RBP}} = 8(3E + 2N + M) + M N + M + N.
\end{equation*}
For List-RBP, (3) is replaced by \(S_1 + S_2\) stored values for the two lists, and (9) is not necessary.
Therefore, for List-RBP,
\begin{equation*}
    M_{\text{List-RBP}} = 8(2E + 2N + S_1 + S_2) + M N + M + N.
\end{equation*}
Considering a 576-code with rate \(1/2\), according to the 3GPP procedure described in \cite{3GPP2025}, we have \(M = 352\), \(N = 656\), and \(E = 2088\).
Therefore, \(M_{\text{RBP}} \approx 288\) KB and \(M_{\text{List-RBP}} \approx 269.5\) KB.
On the other hand, for C\&R-RBP and C\&DR-RBP, the precomputed C2V messages \(\cv{i}{j}^\prime\) are stored for subsequent updates.
Thus, additional \(E\) real values are necessary, resulting in approximately 305 KB.
These results are shown in Table~\ref{tab:memory}.
As we can see, our methods require only 5.9\% more memory than the other RBP-based algorithms and 13.2\% more memory than List-RBP.

\begin{table}[ht]
    \centering\footnotesize
    \caption{Memory requirements for a 5G NR 576-code with rate 1/2.}
    \begin{tabular}{cc}
        \toprule
        Algorithm & Memory requirement (KB) \\
        \midrule
        C\&R-RBP, C\&DR-RBP & 305 \\
        List-RBP & 269.5 \\
        other RBP-based & 288\\
        \bottomrule
    \end{tabular}
    \label{tab:memory}
\end{table}
	
	\section{Conclusions}\label{Conclusions}
	
        In this paper, we introduced the algorithms C\&R-RBP and C\&DR-RBP, which further exploit the availability of updated information to mitigate the performance degradation caused by the greedy RBP behavior and the propagation of errors.
The simulation results demonstrated that both algorithms outperform the IDS RBP-based decoding methods with the same overall computational complexity.
In particular, C\&DR-RBP proved especially suitable for low-latency applications due to its lower number of iteration required for achieving a given error-rate performance and the absence of additional computational cost.
The simulation results also confirm the conclusions of the formal analysis, which showed that the \emph{Consensus} strategy increases the probability of correct bit decisions under the initial decoding conditions and in the presence of severe bit transmission errors.
An important direction for future research is the extension of the proposed methods to more general channel models, including more complex fading scenarios such as Rician fading channels, impulsive noise, correlated channels and other non-Gaussian environments.
	
	\appendices
	
	\section{-- Proof of Theorem 1}\label{appA}
	
        Define the following recursive function:
\begin{equation}\label{Ji}
    G_j(\ell_{j+1},... \,,\ell_n) = \intzeroinf G_{j-1}(\ell_j,... \,,\ell_n)h_j(\ell_j)d\ell_j\muspace,				
\end{equation}
for \(j \in \{1,... \,,n-1\}\), and 
\begin{equation}\label{G0}
    G_0(\ell_1,...\,,\ell_n) = g(\ell_1,...\,,\ell_n).
\end{equation}
Using a finite mathematical induction routine, we can show that \(G_j\) has the following property:
\begin{equation}\label{indution0}
    G_j(... \,,-\ell_i\muspace,...) = -G_j(... \,, \ell_i\muspace ...), \  i \in \{j+1,... \,,n\};
\end{equation}
i.e., \(G_j\) is an odd function with respect to each of its arguments.
The base case \(j = 0\) is trivial, as it follows immediately from \eqref{G0}.
Next, suppose that \eqref{indution0} is true for \(j = j_0\), \(j_0 > 0\).
Thus, for \(j = j_0+1\), from \eqref{Ji} one can easily prove that 
\begin{equation*}
    G_{j_0+1}(\ell_{j_0+2},... \,,-\ell_i\muspace,... \,,\ell_n) = -G_{j_0+1}(\ell_{j_0+2},... \,,\ell_i\muspace,... \,,\ell_n),
\end{equation*}
for \(i \in \{\,j_0+2,... \,,n\}\).
Therefore, \eqref{indution0} is proved.

Next, we show using again a finite mathematical induction routine that
\begin{equation}\label{induction1}
    \E{M} = \intinf f_n(\ell_n)\muspace...\!\!\intinf f_j(\ell_j)G_{j-1}(\ell_j\muspace, ... \,, \ell_n)d\ell_j...\muspace d\ell_n\muspace,
\end{equation}
for \( j \in \{1,... \,,n\}\) and supposing that \(\E{M}\) exists.
First, we show that \eqref{induction1} is true for \(j=1\) (the base case).
Since \(\{L_j\}_{j=1}^n\) are independent random variables, the corresponding multivariate p.d.f.~is given as
\begin{equation*}
    f_{L_1, ... \,, L_n}(\ell_1, ... \,, \ell_n) = \prod_{j=1}^{n}\,f_j(\ell_j).
\end{equation*}
Thus, by the law of the unconscious statistician, the expected value of \(M\) is given as
\begin{IEEEeqnarray*}{rCl}
    \E{M} & = & \intinf ...\intinf g(\ell_1, ... \,, \ell_n)\prod_{j=1}^{n}\,f_j(\ell_j)d\ell_n... \muspace d\ell_1	\\
    & = & \intinf f_n(\ell_n)\muspace...\!\!\intinf f_1(\ell_1)g(\ell_1, ... \,, \ell_n)d\ell_1...\muspace d\ell_n\muspace.
\end{IEEEeqnarray*}
Therefore, since \(g(\ell_1, ... \,, \ell_n) = G_0(\ell_1, ... \,, \ell_n)\), we have shown that \eqref{induction1} is true for the base case (\(j=1\)).

Next, suppose that \eqref{induction1} is true for \(j = j_0\), i.e.,
\begin{equation}\label{k0}
    \E{M} = \!\intinf f_n(\ell_n)\,...\!\!\intinf f_{j_0}(\ell_{j_0})G_{j_0-1}(\ell_{j_0}, ... \,, \ell_n)d\ell_{j_0}...\muspace d\ell_n.
\end{equation}
Call \(I_{j_0}\) the innermost integral in \eqref{k0}.
We can split the integration interval of \(I_{j_0}\) into \(\sR^-\) and \(\sR^+\), resulting in
\begin{IEEEeqnarray*}{rCl}
    I_{j_0} & = & \int_{-\infty}^{0}\!\!f_{j_0}(\ell_{j_0})G_{j_0-1}(\ell_{j_0}, ... \,, \ell_n)d\ell_{j_0} \\
    && \negmedspace{} + \intzeroinf f_{j_0}(\ell_{j_0})G_{j_0-1}(\ell_{j_0}, ... \,, \ell_n)d\ell_{j_0}. \IEEEyesnumber \label{twointegrals}
\end{IEEEeqnarray*}
Call \(I_{j_0}^-\) the first integral above; by making the substitution \(u = -\ell_{j_0}\), we have
\begin{equation*}
    I_{j_0}^- = -\int_{\infty}^{0}f_{j_0}(-u)G_{j_0-1}(-u, ... \,, \ell_n)du\muspace.
\end{equation*}
According to the property of \(G_j\) given in \eqref{indution0}, we have
\begin{IEEEeqnarray*}{rCl}
    I_{j_0}^- & = & \int_{\infty}^{0}f_{j_0}(-u)G_{j_0-1}(u, ... \,, \ell_n)du \\
    & = & -\intzeroinf f_{j_0}(-u)G_{j_0-1}(u, ... \,, \ell_n)du\muspace,
\end{IEEEeqnarray*}
where the last step is due to the swapping of the lower and upper bounds of the integral.
Thus, we can rewrite \eqref{twointegrals}, substituting \(I_{j_0}^-\) and rearranging, as
\begin{equation}\label{Jk0}
    I_{j_0} = \intzeroinf G_{j_0-1}(\ell_{j_0}, ... \,, \ell_n)(\,f_{j_0}(\ell_{j_0})-f_j(-\ell_{j_0}))d\ell_{j_0}.
\end{equation}
Recognizing above \(h_{j_0}(\ell_{j_0}) = f_{j_0}(\ell_{j_0})-f_{j_0}(-\ell_{j_0})\), we see that the right-hand side of \eqref{Jk0} is just \(G_{j_0}(\ell_{j_0+1},... \,,\ell_n)\), according to the definition of \(G_j\) in \eqref{Ji}.
Therefore, the expression in \eqref{k0}, which is a true statement by assumption, can be rewritten, substituting the innermost integral \(I_{j_0}\) by \(G_{j_0}(\ell_{j_0+1},... \,,\ell_n)\), resulting in \(\E{M}\) being equal to
\begin{equation}\label{k0+1}
    \intinf f_n(\ell_n)\,...\!\!\intinf f_{j_0+1}(\ell_{j_0+1})G_{j_0}(\ell_{j_0+1},... \,,\ell_n)d\ell_{j_0+1}...\muspace d\ell_n\muspace.
\end{equation}
Therefore, since the expression in \eqref{k0+1} is equal to the left-hand side of \eqref{induction1} for \(j = j_0+1\), we have shown that if \eqref{induction1} is true for \(j = j_0\), then it must also be true for \(j=j_0+1\), which concludes the proof of \eqref{induction1} by mathematical induction.

Now, since we established the soundness of the expression in \eqref{induction1}, we can express the \(\E{M}\), choosing \(j=n\), as
\begin{equation}\label{EY}
    \E{M} = \intinf f_n(\ell_n)G_{n-1}(\ell_n)d\ell_n.
\end{equation}
Splitting the integration interval into \(\sR^-\) and \(\sR^+\) results in
\begin{equation*}
    \E{M} = \int_{-\infty}^{0}\!\!f_n(\ell_n)G_{n-1}(\ell_n)d\ell_n + \intzeroinf f_n(\ell_n)G_{n-1}(\ell_n)d\ell_n\muspace.
\end{equation*}
Next, make the change of variable \(u = -\ell_n\) in the first integral above, and use the property of \(G_j\) given in \eqref{indution0}.
Finally, rearranging the resulting terms we can rewrite \eqref{EY} as
\begin{equation}\label{EY2}
    \E{M} = \intzeroinf G_{n-1}(\ell_n)h_n(\ell_n)d\ell_n\muspace.
\end{equation}
Using the definition of \(G_j\) in \eqref{Ji}, we can substitute it iteratively in \eqref{EY2} as follows:
\begin{IEEEeqnarray*}{rCL}
    \E{M} & = & \intzeroinf \intzeroinf G_{n-2}(\ell_{n-1}, \ell_n)h_{n-1}(\ell_{n-1})h_n(\ell_n)d\ell_{n-1}d\ell_n \\
    & \vdots & \\
    & = & \intzeroinf ...\intzeroinf g(\ell_1,...\,,\ell_n)\prod_{i}^{n}h_j(\ell_j)d\ell_1...\muspace d\ell_n\muspace.
\end{IEEEeqnarray*}
which completes the proof.
	
	\section{-- Proof of Corollary~1}\label{appB}
	
        From the definition of \(h_j(\ell_j)\) as \(f_j(\ell_j) - f_j(-\ell_j)\), and since \(L_j\) is Gaussian distributed, we have
\begin{IEEEeqnarray*}{rCl}
    h_j(\ell_j) & = & \frac{\exp\left(-\frac{(\ell_j-\mu_j)^2}{2\sigma_j^2}\right)}{\sigma_j\sqrt{2\pi}} - \frac{\exp\left(-\frac{(-\ell_j-\mu_j)^2}{2\sigma_j^2}\right)}{\sigma_j\sqrt{2\pi}} \\
    & = & \frac{\exp\left(\frac{-\ell_j^2-\mu_j^2}{2\sigma_j^2}\right)}{\sigma_j\sqrt{2\pi}}\bigg(\exp\bigg(\tfrac{\ell_j\mu_j}{\sigma_j^2}\bigg) - \exp\bigg(\tfrac{-\ell_j\mu_j}{\sigma_j^2}\bigg)\bigg) \\
    & = & \tfrac{2}{\sigma_j\sqrt{2\pi}}\,\exp\bigg(\tfrac{-\ell_j^2-\mu_j^2}{2\sigma_j^2}\bigg)\sinh\bigg(\tfrac{\ell_j\mu_j}{\sigma_j^2}\bigg).
\end{IEEEeqnarray*}
Call 
\[
w_j(\ell_j) = \tfrac{2}{\sigma_j\sqrt{2\pi}}\,\exp\bigg(\tfrac{-\ell_j^2-\mu_j^2}{2\sigma_j^2}\bigg);
\]
thus, from Theorem~\ref{Theorem}, \(\E{M}\) is equal to
\begin{equation}\label{signs}
    \intzeroinf ...\intzeroinf g(\ell_1,...\,,\ell_n)\prod_{j=1}^{n}w_j(\ell_j)\prod_{j=1}^{n}\sinh\left(\tfrac{\ell_j\mu_j}{\sigma_j^2}\right)d\ell_1...\muspace d\ell_n\muspace.
\end{equation}
Note that, since each interval of integration above is positive (\(\ell_j \ge 0\)), we have, by the property of function \(g\) in Theorem~\ref{Theorem}, that \(g(\ell_1, ... \,, \ell_n) \ge 0\).
Furthermore, \(w_j(\ell_j) \ge 0\) for all \(j\muspace\).
Therefore, the sign of the integrand in \eqref{signs} is completely determined by the product of hyperbolic sines.
More precisely, from the fact that \(s(\sinh(z)) = s(z)\), we have that
\begin{equation*}
    s\!\left(\sinh\!\left(\tfrac{\ell_j\mu_j}{\sigma_j^2}\right)\right) = s\!\left(\tfrac{\ell_j\mu_j}{\sigma_j^2}\right).
\end{equation*}
Since the integral in \eqref{signs} is over \(\ell_j \ge 0\) for all \(j\muspace\), we have
\begin{equation*}
    s\!\left(\tfrac{\ell_j\mu_j}{\sigma_j^2}\right) = s\left(\mu_j\right).
\end{equation*}
Thus, it follows that
\begin{equation*}
    s\bigg(\prod_{j=1}^{n}\sinh\!\left(\tfrac{\ell_j\mu_j}{\sigma_j^2}\right)\bigg) = \prod_{j=1}^{n}s\!\left(\sinh\!\left(\tfrac{\ell_j\mu_j}{\sigma_j^2}\right)\right) =  \prod_{j=1}^{n} s\left(\mu_j\right).
\end{equation*}
As a result, if \(\prod_{j=1}^{n} s\left(\mu_j\right)\) is positive, the integrand in \eqref{signs} is positive and so is the integral.
Similarly, if \(\prod_{j=1}^{n} s\left(\mu_j\right)\) is negative, the integrand is negative and so is the integral, which completes the proof.
	
	\section{-- Proof of Theorem~2}\label{appC}
	
        From property (\(i\)) of function \(\phi\), we have that
\begin{equation*}
    \max_{\ v_k \in \, \cV(c_i)\setminus\{v_j\}\!\!} \phi\left(\abs{\vc{k}{i}}\right) \le \sum_{v_k \in \, \cV(c_i)\setminus\{v_j\}} \phi\left(\abs{\vc{k}{i}}\right)
\end{equation*}
Therefore, from property (\(ii\)), we can write
\begin{equation}\label{phi2}
    \phi\bigg( \sum_{v_k \in \,\! \cV(c_i)\setminus\{v_j\}} \!\!\!\!\!\!\!\phi\left(\abs{\vc{k}{i}}\right)\bigg) \le \phi\!\left(\max_{v_k \in \, \cV(c_i)\setminus\{v_j\}}\!\!\phi\left(\abs{\vc{k}{i}}\right)\!\right)\!.
\end{equation}
Also, from property (\(ii\)), we have the following:
\begin{equation}
    \max_{v_k \in \, \cV(c_i)\setminus\{v_j\}}\phi\left(\abs{\vc{k}{i}}\right) = \phi\!\left(\min_{v_k \in \, \cV(c_i)\setminus\{v_j\}} \abs{\vc{k}{i}}\right)\!.
\end{equation}
Therefore, from property (\(iii\)), we have
\begin{equation}
    \phi\!\left(\max_{v_k \in \, \cV(c_i)\setminus\{v_j\}}\!\!\phi\left(\abs{\vc{k}{i}}\right)\!\right) = \min_{v_k \in \, \cV(c_i)\setminus\{v_j\}} \abs{\vc{k}{i}}. \label{maxmin}
\end{equation}
Thus, from \eqref{c2valt}, \eqref{phi2} and \eqref{maxmin} we have
\begin{IEEEeqnarray*}{rCl}
    \abs{\cv{i}{j}} & = & \phi\bigg( \sum_{v_k \in \cV(c_i)\setminus\{v_j\}} \! \phi(\beta_{v_k})\bigg)\\
    & \le & \phi\!\left(\max_{v_k \in \, \cV(c_i)\setminus\{v_j\}}\!\!\phi\left(\abs{\vc{k}{i}}\right)\!\right) \\
    & = & \min_{v_k \in \, \cV(c_i)\setminus\{v_j\}} \abs{\vc{k}{i}},
\end{IEEEeqnarray*}
which completes the proof.

    \newpage
    \section{-- Real Value Comparisons of List-RBP}\label{appD}

        In their original paper, the authors of \cite{Yan2024} provided the following estimate of real-value comparisons:
\[
    (E-1)(\bard_v-1)(\bard_c-1)(S_1-1),
\]
where \(S_1\) is the size of the main list (denoted \(L_R\) in the original paper).
They argued that this number results from comparing each corresponding edge in \(L_N\) (the secondary or local list) with those in \(L_R\).
However, we found this estimate to be somewhat inaccurate.
First, each newly calculated residual must be compared with the edges currently in the local list \(L_N\) to determine whether it should be included in this list.
This requires \(S_2\) real-value comparisons, where \(S_2\) is the size of the local list.
Hence, for each algorithm iteration, the total number of comparisons required for this operation is
\[
    E(\bard_v-1)(\bard_c-1) S_2.
\]

Second, for each algorithm iteration, the local list \(L_N\) must be used to partially update the main list \(L_R\), which requires comparing each edge in \(L_N\) with the edges in \(L_R\).
This requires \(S_1 \muspace S_2\) comparisons.
Therefore, we estimate the resulting total number of real-value comparisons per algorithm iteration to be
\[
   E (\bard_v-1)(\bard_c-1)S_2 + S_1 \muspace S_2.
\]
    
    \section{-- Sensitivity to \(i_0\)}\label{appE}

        As stated in Subsection~\ref{Simul1}, we chose \(i_0 = 3\) heuristically to obtain a satisfactory performance for C\&DR-RBP.
We used the performance results from Simulation~1 to guide this choice, particularly the FER curves.
Figure~\ref{fig:I0516} presents the FER curves of C\&DR-RBP for \(2 \le i_0 \le 5\).
The curve of C\&R-RBP is also shown, which is equivalent to C\&DR-RBP with \(i_0 = 1\).
As we can see, moving from \(i_0 = 2\) to \(i_0 = 3\) improves the performance at iteration \(i=2\), but results in slightly worse performance at iteration \(i = 3\).
This is compensated by the best performance for iterations \(i \ge 5\).
This trade-off was the criterion used to choose \(i_0 = 3\).
Although there is a performance difference for \(i \ge 5\), this is relatively small, indicating that C\&DR-RBP is more sensitive to the choice of \(i_0\) during the first iterations (\(i_0 < 3\) than during later iterations.

\begin{figure}[ht]
    \centering
    \includegraphics[trim={18 0 8 0}, clip, width=0.9\linewidth]{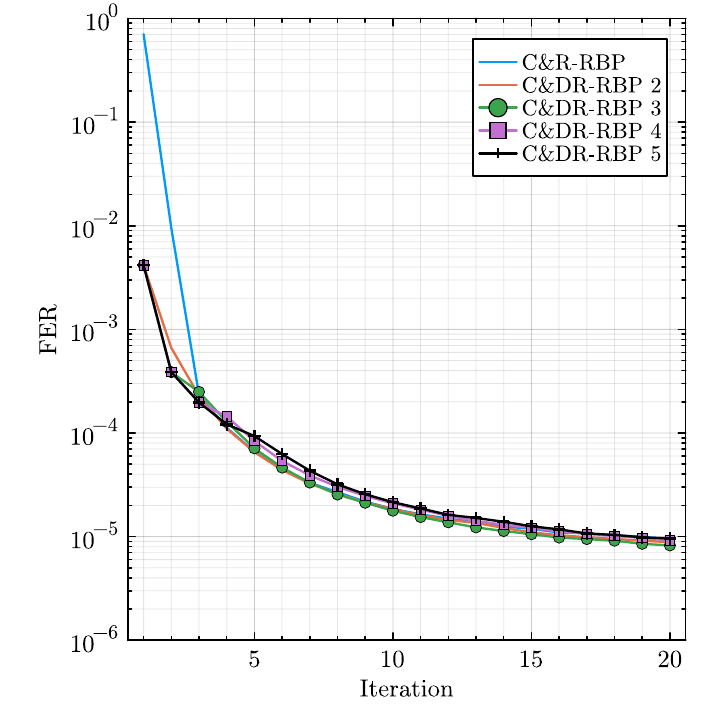}
    \caption{Sensitivity to \(i_0\): FER vs number of iterations using the setup of Simulation~1 for \(i_0 \in \{2,3,4,5\}\).}
    \label{fig:I0516}
\end{figure}

Let us also consider the results obtained using the setup of Simulation~2, shown in Figure~\ref{fig:I01248}.
As we can see, the conclusion reached above also holds here: the sensitivity to \(i_0\) is more pronounced during the first iterations.
It is also worth noting the trade-off associated with  different choices of \(i_0\): while \(i_0 > 3\) yields better performance for \(3 \le i \le 4\), choosing \(i_0 = 3\) yields better performance for \(5 \le i \le 8\).
For \(i > 8\), this choice results in slightly worse performance, but the difference is negligible in practical terms.

\begin{figure}[ht]
    \centering
    \includegraphics[trim={18 0 8 0}, clip, width=0.9\linewidth]{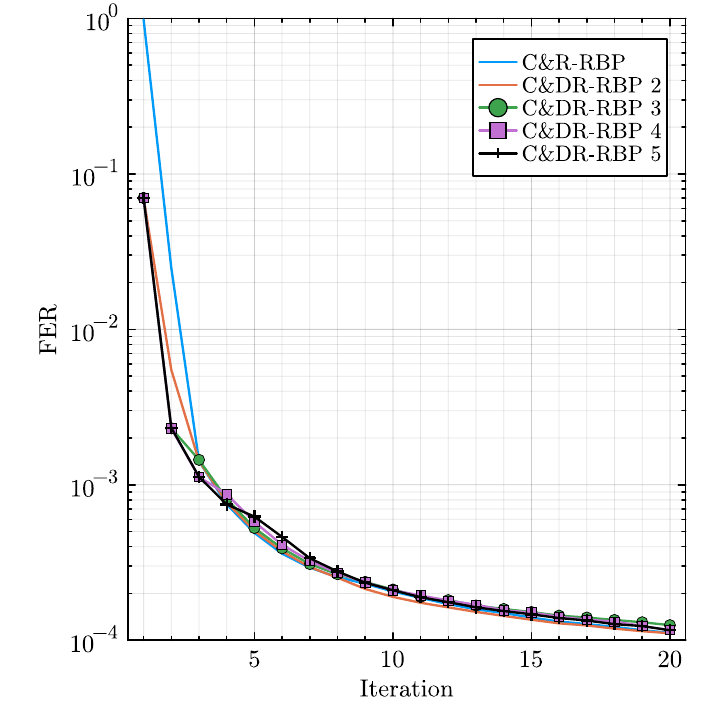}
    \caption{Sensitivity to \(i_0\): FER vs number of iterations using the setup of Simulation~2 for \(i_0 \in \{2,3,4,5\}\).}
    \label{fig:I01248}
\end{figure}
	
	\printbibliography[title=REFERENCES]

	
\end{document}